\documentclass[amsart]{article}
\usepackage{bm}
\usepackage{tikz}
\usepackage{tikz,pgfplots}
\usepackage{caption}
\captionsetup[table]{justification=raggedright}
\captionsetup[figure]{justification=raggedright, singlelinecheck=off}

\usepackage{color}
\usepackage{amsmath}
\usepackage{amssymb}\usepackage{amsthm}
\usepackage{amsfonts}
\usepackage{amscd}
\setlength{\textwidth}{17cm}
\setlength{\textheight}{22cm}
\setlength{\oddsidemargin}{-0.5cm}
\setlength{\evensidemargin}{-0.5cm}
\begin{document}
\theoremstyle{plain}
\newtheorem*{ithm}{Theorem}
\newtheorem*{iprop}{Proposition}
\newtheorem*{idefn}{Definition}
\newtheorem{thm}{Theorem}[section]
\newtheorem{lem}[thm]{Lemma}
\newtheorem{dlem}[thm]{Lemma/Definition}
\newtheorem{prop}[thm]{Proposition}
\newtheorem{set}[thm]{Setting}
\newtheorem{cor}[thm]{Corollary}
\newtheorem*{icor}{Corollary}
\theoremstyle{definition}
\newtheorem{assum}[thm]{Assumption}
\newtheorem{notation}[thm]{Notation}
\newtheorem{setting}[thm]{Setting}
\newtheorem{defn}[thm]{Definition}
\newtheorem{clm}[thm]{Claim}
\newtheorem{ex}[thm]{Example}
\theoremstyle{remark}
\newtheorem{rem}[thm]{Remark}
\newcommand{\unit}{\mathbb I}
\newcommand{\ali}[1]{{\caB}_{[ #1 ,\infty)}}
\newcommand{\alm}[1]{{\caB}_{(-\infty, #1 ]}}
\newcommand{\nn}[1]{\lV #1 \rV}
\newcommand{\br}{{\mathbb R}}
\newcommand{\dm}{{\rm dom}\mu}
\newcommand{\lb}{l_{\bb}(n,n_0,k_R,k_L,\lal,\bbD,\bbG,Y)}
\newcommand{\Ad}{\mathop{\mathrm{Ad}}\nolimits}
\newcommand{\Proj}{\mathop{\mathrm{Proj}}\nolimits}
\newcommand{\RRe}{\mathop{\mathrm{Re}}\nolimits}
\newcommand{\RIm}{\mathop{\mathrm{Im}}\nolimits}
\newcommand{\Wo}{\mathop{\mathrm{Wo}}\nolimits}
\newcommand{\Prim}{\mathop{\mathrm{Prim}_1}\nolimits}
\newcommand{\Primz}{\mathop{\mathrm{Prim}}\nolimits}
\newcommand{\ClassA}{\mathop{\mathrm{ClassA}}\nolimits}
\newcommand{\Class}{\mathop{\mathrm{Class}}\nolimits}
\newcommand{\diam}{\mathop{\mathrm{diam}}\nolimits}
\def\qed{{\unskip\nobreak\hfil\penalty50
\hskip2em\hbox{}\nobreak\hfil$\square$
\parfillskip=0pt \finalhyphendemerits=0\par}\medskip}
\def\proof{\trivlist \item[\hskip \labelsep{\bf Proof.\ }]}
\def\endproof{\null\hfill\qed\endtrivlist\noindent}
\def\proofof[#1]{\trivlist \item[\hskip \labelsep{\bf Proof of #1.\ }]}
\def\endproofof{\null\hfill\qed\endtrivlist\noindent}
\numberwithin{equation}{section}

\newcommand{\kakunin}[1]{{\color{red}}}

\newcommand{\oo}{{\boldsymbol\omega}}
\newcommand{\ctv}{\caC_{(\theta,\varphi)}}
\newcommand{\btv}{\caB_{(\theta,\varphi)}}
\newcommand{\amf}{\caB}
\newcommand{\at}{\caA_{\bbZ^2}}
\newcommand{\oz}{\caO_0}
\newcommand{\caA}{{\mathcal A}}
\newcommand{\caB}{{\mathcal B}}
\newcommand{\caC}{{\mathcal C}}
\newcommand{\caD}{{\mathcal D}}
\newcommand{\caE}{{\mathcal E}}
\newcommand{\caF}{{\mathcal F}}
\newcommand{\caG}{{\mathcal G}}
\newcommand{\caH}{{\mathcal H}}
\newcommand{\caI}{{\mathcal I}}
\newcommand{\caJ}{{\mathcal J}}
\newcommand{\caK}{{\mathcal K}}
\newcommand{\caL}{{\mathcal L}}
\newcommand{\caM}{{\mathcal M}}
\newcommand{\caN}{{\mathcal N}}
\newcommand{\caO}{{\mathcal O}}
\newcommand{\caP}{{\mathcal P}}
\newcommand{\caQ}{{\mathcal Q}}
\newcommand{\caR}{{\mathcal R}}
\newcommand{\caS}{{\mathcal S}}
\newcommand{\caT}{{\mathcal T}}
\newcommand{\caU}{{\mathcal U}}
\newcommand{\caV}{{\mathcal V}}
\newcommand{\caW}{{\mathcal W}}
\newcommand{\caX}{{\mathcal X}}
\newcommand{\caY}{{\mathcal Y}}
\newcommand{\caZ}{{\mathcal Z}}
\newcommand{\bba}{{\mathbb a}}
\newcommand{\bbA}{{\mathbb A}}
\newcommand{\bbB}{{\mathbb B}}
\newcommand{\bbC}{{\mathbb C}}
\newcommand{\bbD}{{\mathbb D}}
\newcommand{\bbE}{{\mathbb E}}
\newcommand{\bbF}{{\mathbb F}}
\newcommand{\bbG}{{\mathbb G}}
\newcommand{\bbH}{{\mathbb H}}
\newcommand{\bbI}{{\mathbb I}}
\newcommand{\bbJ}{{\mathbb J}}
\newcommand{\bbK}{{\mathbb K}}
\newcommand{\bbL}{{\mathbb L}}
\newcommand{\bbM}{{\mathbb M}}
\newcommand{\bbN}{{\mathbb N}}
\newcommand{\bbO}{{\mathbb O}}
\newcommand{\bbP}{{\mathbb P}}
\newcommand{\bbQ}{{\mathbb Q}}
\newcommand{\bbR}{{\mathbb R}}
\newcommand{\bbS}{{\mathbb S}}
\newcommand{\bbT}{{\mathbb T}}
\newcommand{\bbU}{{\mathbb U}}
\newcommand{\bbV}{{\mathbb V}}
\newcommand{\bbW}{{\mathbb W}}
\newcommand{\bbX}{{\mathbb X}}
\newcommand{\bbY}{{\mathbb Y}}
\newcommand{\bbZ}{{\mathbb Z}}
\newcommand{\str}{^*}
\newcommand{\lv}{\left \vert}
\newcommand{\rv}{\right \vert}
\newcommand{\lV}{\left \Vert}
\newcommand{\rV}{\right \Vert}
\newcommand{\la}{\left \langle}
\newcommand{\ra}{\right \rangle}
\newcommand{\ltm}{\left \{}
\newcommand{\rtm}{\right \}}
\newcommand{\lcm}{\left [}
\newcommand{\rcm}{\right ]}
\newcommand{\ket}[1]{\lv #1 \ra}
\newcommand{\bra}[1]{\la #1 \rv}
\newcommand{\lmk}{\left (}
\newcommand{\rmk}{\right )}
\newcommand{\al}{{\mathcal A}}
\newcommand{\md}{M_d({\mathbb C})}
\newcommand{\ainn}{\mathop{\mathrm{AInn}}\nolimits}
\newcommand{\id}{\mathop{\mathrm{id}}\nolimits}
\newcommand{\Tr}{\mathop{\mathrm{Tr}}\nolimits}
\newcommand{\Ran}{\mathop{\mathrm{Ran}}\nolimits}
\newcommand{\Ker}{\mathop{\mathrm{Ker}}\nolimits}
\newcommand{\Aut}{\mathop{\mathrm{Aut}}\nolimits}
\newcommand{\spn}{\mathop{\mathrm{span}}\nolimits}
\newcommand{\Mat}{\mathop{\mathrm{M}}\nolimits}
\newcommand{\UT}{\mathop{\mathrm{UT}}\nolimits}
\newcommand{\DT}{\mathop{\mathrm{DT}}\nolimits}
\newcommand{\GL}{\mathop{\mathrm{GL}}\nolimits}
\newcommand{\spa}{\mathop{\mathrm{span}}\nolimits}
\newcommand{\supp}{\mathop{\mathrm{supp}}\nolimits}
\newcommand{\rank}{\mathop{\mathrm{rank}}\nolimits}
\newcommand{\idd}{\mathop{\mathrm{id}}\nolimits}
\newcommand{\ran}{\mathop{\mathrm{Ran}}\nolimits}
\newcommand{\dr}{ \mathop{\mathrm{d}_{{\mathbb R}^k}}\nolimits} 
\newcommand{\dc}{ \mathop{\mathrm{d}_{\cc}}\nolimits} \newcommand{\drr}{ \mathop{\mathrm{d}_{\rr}}\nolimits} 
\newcommand{\zin}{\mathbb{Z}}
\newcommand{\rr}{\mathbb{R}}
\newcommand{\cc}{\mathbb{C}}
\newcommand{\ww}{\mathbb{W}}
\newcommand{\nan}{\mathbb{N}}\newcommand{\bb}{\mathbb{B}}
\newcommand{\aaa}{\mathbb{A}}\newcommand{\ee}{\mathbb{E}}
\newcommand{\pp}{\mathbb{P}}
\newcommand{\wks}{\mathop{\mathrm{wk^*-}}\nolimits}
\newcommand{\mk}{{\Mat_k}}
\newcommand{\mnz}{\Mat_{n_0}}
\newcommand{\mn}{\Mat_{n}}
\newcommand{\dist}{\dc}
\newcommand{\braket}[2]{\left\langle#1,#2\right\rangle}
\newcommand{\ketbra}[2]{\left\vert #1\right \rangle \left\langle #2\right\vert}
\newcommand{\abs}[1]{\left\vert#1\right\vert}
\newcommand{\trl}[2]
{T_{#1}^{(\theta,\varphi), \Lambda_{#2},\bar V_{#1,\Lambda_{#2}}}}
\newcommand{\trlz}[1]
{T_{#1}^{(\theta,\varphi), \Lambda_{0},\unit}}
\newcommand{\trlt}[2]
{T_{#1}^{(\theta,\varphi), \Lambda_{#2}+t_{#2}\bm e_{\Lambda_{#2}},\bar V_{#1,\Lambda_{#2}+t_{#2}\bm e_{\Lambda_{#2}}}}}
\newcommand{\trltj}[4]
{T_{#1}^{(\theta,\varphi), \Lambda_{#2}^{(#3)}+t_{#4}\bm e_{\Lambda_{#2}^{(#3)}},\bar V_{#1,\Lambda_{#2}^{(#3)}+t_{#4}\bm e_{\Lambda_{#2}^{(#3)}}}}}
\newcommand{\trltjp}[4]
{T_{#1}^{(\theta,\varphi), {\Lambda'}_{#2}^{(#3)}+t_{#4}'\bm e_{{\Lambda'}_{#2}^{(#3)}},\bar V_{#1,{\Lambda'}_{#2}^{(#3)}+t_{#4}'\bm e_{{\Lambda'}_{#2}^{(#3)}}}}}
\newcommand{\trlta}[2]
{T_{#1}^{(\theta,\varphi), \Lambda_{#2}^{t_{#2}},\bar V_{#1,\Lambda_{#2}^{t_{#2}}}}}
\newcommand{\trltb}[2]
{T_{#1}^{(\theta,\varphi), \Lambda_{#2}+t\bm e_{\Lambda_{#2}},\bar V_{#1,\Lambda_{#2}+t\bm e_{\Lambda_{#2}}}}}

\newcommand{\trlpt}[2]
{T_{#1}^{(\theta,\varphi), \Lambda_{#2}'+t_{#2}'\bm e_{\Lambda_{#2}'},\bar V_{#1,\Lambda_{#2}'+t_{#2}'\bm e_{\Lambda_{#2}'}}}}

\newcommand{\trll}[3]
{T_{#1, #3}^{(\theta,\varphi), \Lambda_{#2},\bar V_{#1,\Lambda_{#2}}}}
\newcommand{\trlp}[2]
{T_{#1}^{(\theta,\varphi), \Lambda_{#2}',\bar V_{#1,\Lambda_{#2}'}}}
\newcommand{\trlpp}[2]
{T_{#1}^{(\theta,\varphi), \Lambda_{#2}'',\bar V_{#1,\Lambda_{#2}''}}}
\newcommand{\trlj}[3]
{T_{\rho_{#1}}^{(\theta,\varphi), \Lambda_{#2}^{(#3)}, V_{\rho_{#1},\Lambda_{#2}^{(#3)}}}}
\newcommand{\trljp}[3]
{T_{{\rho'}_{#1}}^{(\theta,\varphi), {\Lambda'}_{#2}^{(#3)},V_{\rho'_{#1},{\Lambda'}_{#2}^{(#3)}}}}
\newcommand{\wod}[3]
{W_{#1\Lambda_{#2}\Lambda_{#3}}}
\newcommand{\wodt}[3]
{{W^{\bm t}}_{#1\Lambda_{#2}\Lambda_{#3}}}
\newcommand{\comp}[2]
{{(\theta_{#1},\varphi_{#1}), \Lambda_{#2},\{\bar V_{\eta,\Lambda_{#2}}\}_\eta}}
\newcommand{\ltj}[2]{\Lambda_{#1}+{#2} \bm e_{\Lambda_{#1}} }
\newcommand{\ltjp}[2]{{\Lambda'}_{#1}+{#2} \bm e_{\Lambda_{#1}} }
\newtheorem{nota}{Notation}[section]
\def\qed{{\unskip\nobreak\hfil\penalty50
\hskip2em\hbox{}\nobreak\hfil$\square$
\parfillskip=0pt \finalhyphendemerits=0\par}\medskip}
\def\proof{\trivlist \item[\hskip \labelsep{\bf Proof.\ }]}
\def\endproof{\null\hfill\qed\endtrivlist\noindent}
\def\proofof[#1]{\trivlist \item[\hskip \labelsep{\bf Proof of #1.\ }]}
\def\endproofof{\null\hfill\qed\endtrivlist\noindent}
\newcommand{\wrl}[2]{Y_{#1}^{\Lambda_0^{(#2)}}}
\newcommand{\wrlt}[2]{\tilde Y_{#1}^{\Lambda_0^{(#2)}}}
\newcommand{\ZZ}{\bbZ_2\times\bbZ_2}
\newcommand{\SSS}{\mathcal{S}}
\newcommand{\cs}{S}
\newcommand{\ct}{t}
\newcommand{\hS}{S}
\newcommand{\vv}{{\boldsymbol v}}
\newcommand{\ala}{a}
\newcommand{\bet}{b}
\newcommand{\gam}{c}
\newcommand{\alphas}{\alpha}
\newcommand{\alphai}{\alpha^{(\sigma_{1})}}
\newcommand{\alphan}{\alpha^{(\sigma_{2})}}
\newcommand{\betas}{\beta}
\newcommand{\betai}{\beta^{(\sigma_{1})}}
\newcommand{\betan}{\beta^{(\sigma_{2})}}
\newcommand{\alphass}{\alpha^{{(\sigma)}}}
\newcommand{\uu}{V}
\newcommand{\vp}{\varsigma}
\newcommand{\vpr}{R}
\newcommand{\tg}{\tau_{\Gamma}}
\newcommand{\sgg}{\Sigma_{\Gamma}}
\newcommand{\nh}{t28}
\newcommand{\rk}{6}
\newcommand{\nii}{2}
\newcommand{\nhh}{28}
\newcommand{\sjt}{30}
\newcommand{\sjtg}{30}
\newcommand{\bcg}{\caB(\caH_{\alpha})\otimes  C^{*}(\Sigma)}
\newcommand{\pza}[1]{\pi_0\lmk\caA_{\Lambda_{#1}}\rmk''}
\newcommand{\pzac}[1]{\pi_0\lmk\caA_{\Lambda_{#1}}\rmk'}
\newcommand{\pzacc}[1]{\pi_0\lmk\caA_{\Lambda_{#1}^c}\rmk'}
\newcommand{\trlzi}[2]{T_{#1}^{(\theta,\varphi) \Lambda_0^{(#2)}\unit}}

\newcommand{\obk}{\omega_{\mathop{\mathrm{bk}}}}
\newcommand{\obd}{\omega_{\mathop{\mathrm{bd}}}}
\newcommand{\obdm}{\omega_{\mathop{\mathrm{bd}(-)}}}
\newcommand{\abk}{\caA_{\mathbb Z^2}}
\newcommand{\hu}{\mathop {\mathrm H_{U}}}
\newcommand{\hd}{\mathop {\mathrm H_{D}}}

\newcommand{\abd}{\caA_{\hu}}
\newcommand{\aloch}{\caA_{\hu,\mathop{\mathrm {loc}}}}
\newcommand{\alocg}[1]{\caA_{#1,\mathop{\mathrm {loc}}}}
\newcommand{\hbk}{\caH_{\mathop{\mathrm{bk}}}}
\newcommand{\hbd}{\caH_{\mathop{\mathrm{bd}}}}
\newcommand{\hbdm}{\caH_{\mathop{\mathrm{bd}(-)}}}
\newcommand{\pbk}{\pi_{\mathop{\mathrm{bk}}}}
\newcommand{\pbd}{\pi_{\mathop{\mathrm{bd}}}}
\newcommand{\pbdm}{\pi_{\mathop{\mathrm{bd}(-)}}}
\newcommand{\mopbk}{{\mathop{\mathrm{bk}}}}
\newcommand{\mopbd}{{\mathop{\mathrm{bd}}}}
\newcommand{\Obk}{O_{\mathop{\mathrm{bk}}}}
\newcommand{\OUbk}{O_{\mathop{\mathrm{bk}}}^U}
\newcommand{\Orbd}{O^r_{\mathop{\mathrm{bd}}}}
\newcommand{\OrUbd}{O^{rU}_{\mathop{\mathrm{bd}}}}
\newcommand{\Obkl}{O_{\mathop{\mathrm{bk},\Lambda_0}}}
\newcommand{\OUbkl}{O_{\mathop{\mathrm{bk},\Lambda_0}}^U}
\newcommand{\Orbdl}{O^r_{\mathop{\mathrm{bd},\Lambda_{r0}}}}
\newcommand{\OrUbdl}{O^{rU}_{\mathop{\mathrm{bd},\Lambda_{r0}}}}
\newcommand{\Obu}{O_{\mathop{\mathrm{bd}}}^{\mathop{\mathrm{bu}}}}
\newcommand{\Obul}{O_{\mathop{\mathrm{bd}},\lz}^{\mathop{\mathrm{bu}}}}
\newcommand{\Odl}{O_{\lzr}^{\caD}}
\newcommand{\fd}{F^{\caD}}
\newcommand{\hilb}{\mathop{\mathrm {Hilb}}_f}
\newcommand{\Obun}[1]{O_{\mathop{\mathrm{bd}#1}}^{\mathop{\mathrm{bu}}}}
\newcommand{\OrUbdn}[1]{O^{rU}_{\mathop{\mathrm{bd}#1 }}}
\newcommand{\OUbkm}{O_{\mathop{\mathrm{bk}}}^{U(-)}}
\newcommand{\Orbdm}{O^{r(-)_{\mathop{\mathrm{bd} }}}}
\newcommand{\OrUbdm}{O^{rU(-)}_{\mathop{\mathrm{bd}}}}
\newcommand{\OUbklm}{O_{\mathop{\mathrm{bk},\Lambda_{0(-)}}}^{U(-)}}
\newcommand{\Orbdlm}{O^{r(-)}_{\mathop{\mathrm{bd},\Lambda_{r0(-)}}}}
\newcommand{\OrUbdlm}{O^{rU(-)}_{\mathop{\mathrm{bd},\Lambda_{r0(-)}}}}
\newcommand{\Obum}{O_{\mathop{\mathrm{bd}}}^{\mathop{\mathrm{bu}(-)}}}
\newcommand{\Obulm}{O_{\mathop{\mathrm{bd}},{\lm {0(-)}}}^{\mathop{\mathrm{bu}(-)}}}
\newcommand{\Otot}{O_{\mathop{\mathrm{total}}}}
\newcommand{\Ototl}{O_{\mathop{\mathrm{total}}}^{\lz,\lzm}}

\newcommand{\Obj}{{\mathop{\mathrm{Obj}}}}
\newcommand{\Mor}{{\mathop{\mathrm{Mor}}}}
\newcommand{\End}{{\mathop{\mathrm{End}}}}
\newcommand{\ti}[4]{\tilde\iota\lmk(#1,#2), (#3,#4)\rmk}

\newcommand{\Cabkl}{C_{\mathop{\mathrm{bk},\lz}}}
\newcommand{\Cabk}{C_{\mathop{\mathrm{bk}}}}
\newcommand{\CaUbk}{C_{\mathop{\mathrm{bk}}}^U}
\newcommand{\Carbd}{C^r_{\mathop{\mathrm{bd}}}}
\newcommand{\CarUbd}{C^{rU}_{\mathop{\mathrm{bd}}}}
\newcommand{\CaUbkl}{C_{\mathop{\mathrm{bk},\Lambda_0}}^U}
\newcommand{\Carbdl}{C^r_{\mathop{\mathrm{bd},\Lambda_{r0}}}}
\newcommand{\CarUbdl}{C^{rU}_{\mathop{\mathrm{bd},\Lambda_{r0}}}}
\newcommand{\Cabu}{C_{\mathop{\mathrm{bd}}}^{\mathop{\mathrm{bu}}}}
\newcommand{\Cabul}{C_{\mathop{\mathrm{bd}},\lz}^{\mathop{\mathrm{bu}}}}
\newcommand{\Cadl}{C_{\lzr}^{\caD}}
\newcommand{\Hom}{\mathop{\mathrm{Hom}}}
\newcommand{\Catotl}{C_{\mathop{\mathrm{total}}}^{\lz,\lzm}}
\newcommand{\Cafin}{C_{\mathop{\mathrm{bd}},\lz}^{\mathop{\mathrm{fin}}}}

\newcommand{\mm}{\Mor_{\tilde\caM}}
\newcommand{\om}{\Obj\lmk {\tilde\caM}\rmk}
\newcommand{\omt}{\otimes_{\tilde\caM}}

\newcommand{\CaUbkm}{C_{\mathop{\mathrm{bk}}}^{U(-)}}
\newcommand{\Carbdm}{C^{r(-)}_{\mathop{\mathrm{bd}}}}
\newcommand{\CarUbdm}{C^{rU(m)}_{\mathop{\mathrm{bd}}}}
\newcommand{\CaUbklm}{C_{\mathop{\mathrm{bk},\Lambda_{0(-)}}}^{U(-)}}
\newcommand{\Carbdlm}{C^{r(-)}_{\mathop{\mathrm{bd},\Lambda_{r0(-)}}}}
\newcommand{\CarUbdlm}{C^{rU(-)}_{\mathop{\mathrm{bd},\Lambda_{r0(-)}}}}
\newcommand{\Cabum}{C_{\mathop{\mathrm{bd}}}^{\mathop{\mathrm{bu}(-)}}}
\newcommand{\Cabuln}{C_{\mathop{\mathrm{bd}},\lm{0(-)}}^{\mathop{\mathrm{bu}(-)}}}
\newcommand{\Irr}{\mathop{\mathrm{Irr}}}


\newcommand{\Tbk}[3]{T_{#1}^{(\frac{3\pi}2,\frac\pi 2), {#2},{#3}}}
\newcommand{\Vrl}[2]{V_{#1,#2}}
\newcommand{\Tbkv}[2]{T_{#1}^{(\frac{3\pi}2,\frac\pi 2), #2, V_{#1,#2}}}
\newcommand{\Tbd}[3]{T_{#1}^{\mathrm{(l)} #2 #3}}
\newcommand{\Tbdv}[2]{T_{#1}^{\mathrm{(l)} #2\Vrl{{#1}}{#2}}}

\newcommand{\zc}{\caZ\lmk\Cabul\rmk}
\newcommand{\hi}[2]{\hat\iota\lmk #1: #2\rmk}

\newcommand{\lz}{\Lambda_0}
\newcommand{\lzr}{\Lambda_{r0}}
\newcommand{\lm}[1]{{\Lambda_{#1}}}
\newcommand{\llz}{(\lz,\lzr)}
\newcommand{\lmr}[1]{{\Lambda_{r#1}}}
\newcommand{\hlm}[1]{{\hat\Lambda_{#1}}}
\newcommand{\tlm}[1]{{\tilde\Lambda_{#1}}}
\newcommand{\ld}{\Lambda}
\newcommand{\pc}{\mathcal{PC}}
\newcommand{\Cbk}{\caC_{\mopbk}}
\newcommand{\CUbk}{\caC_{\mopbk}^U}
\newcommand{\Crbd}{\caC_{\mopbd}^r}
\newcommand{\Clbd}{\caC_{\mopbd}^l}
\newcommand{\Vbk}[2]{\caV_{#1#2}^{\mopbk}}
\newcommand{\Vbd}[2]{\caV_{#1#2}^{\mopbd}}
\newcommand{\VUbd}[2]{\caV_{#1#2}^{\mopbd U} }
\newcommand{\Vbu}[2]{\caV_{#1#2}^{\mathop{\mathrm{bu}}}}
\newcommand{\bl}{\caB_l}
\newcommand{\fbk}{\caF_{\mopbk}^U}
\newcommand{\fbd}{\caF_{\mopbd}^U}
\newcommand{\gu}{\caG^U}
\newcommand{\hb}[2]{\iota^{(\lz,\lzr)}\lmk#1: #2\rmk}
\newcommand{\hfc}{\hat F^{\llz}}
\newcommand{\ffc}{F^{\llz}}

\newcommand{\lzm}{\Lambda_{0(-)}}
\newcommand{\lzrm}{\Lambda_{r0(-)}}
\newcommand{\pcm}{\mathcal{PC(-)}}
\newcommand{\CUbkm}{\caC_{\mopbk}^{U(-)}}
\newcommand{\Crbdm}{\caC_{\mopbd}^{r(-)}}
\newcommand{\Clbdm}{\caC_{\mopbd}^{l(-)}}
\newcommand{\Vbkm}[2]{\caV_{#1#2}^{\mopbk (-)}}
\newcommand{\Vbdm}[2]{\caV_{#1#2}^{\mopbd(-)}}
\newcommand{\VUbdm}[2]{\caV_{#1#2}^{\mopbd U(-)} }
\newcommand{\Vbum}[2]{\caV_{#1#2}^{\mathop{\mathrm{bu}(-)}}}
\newcommand{\fbkm}{\caF_{\mopbk}^{U(-)}}
\newcommand{\fbdm}{\caF_{\mopbd}^{U(-)}}
\newcommand{\vrdh}[2]{\caV_{#1\ld_{#2}}}
\newcommand{\vrd}{\caV_{\rho\ld}}
\newcommand{\Vrd}{V_{\rho\ld}}
\newcommand{\tvrd}{\tilde\caV_{\rho\ld}}

\newcommand{\lr}[2]{\Lambda_{(#1,0),#2,#2}}
\newcommand{\lef}[2]{\overline{\Lambda_{(#1,0),\pi-#2,#2}}}
\newcommand{\lrhu}{(\Lambda_r)^c\cap \hu}
\newcommand{\lhuc}{\Lambda^c\cap \hu}
\newcommand{\lhu}{\Lambda\cap \hu}
\newcommand{\lrhuz}{(\Lambda_{r0})^c\cap \hu}
\newcommand{\lhucz}{\Lambda_{0}^c\cap \hu}
\newcommand{\lc}[1]{(\lm #1)^c\cap\hu}

\newcommand{\zam}{\caZ_a\lmk\tilde\caM\rmk }
\newcommand{\ozam}{\Obj\lmk \zam\rmk}
\newcommand{\mzam}{\Mor_{\zam}}
\newcommand{\rpc}[2]{\lmk(#1,#2), C_{(#1,#2)}\rmk}
\newcommand{\ozt}{\otimes_{\zam}}
\newcommand{\ool}{\caO_{\omega,\ld_0}}
\newcommand{\ools}[1]{\caO_{#1,\ld_0}}
\newcommand{\col}{C_{\omega,\ld_0}}
\newcommand{\ctvz}{\caC_{(\theta_0,\varphi_0)}}
\newcommand{\cols}[1]{C_{#1,\ld_0}}
\newcommand{\coll}[1]{C_{\omega,#1}}
\newcommand{\tool}{\tilde \caO_{\omega,\ld_0}}
\newcommand{\tcols}[1]{\tilde C_{#1,\ld_0}}
\newcommand{\tcol}{\tilde C_{\omega,\ld_0}}
\newcommand{\folz}{F_{\omega,\varphi,\ld_0}}
\newcommand{\fol}[2]{F_{#1,#2,\ld_0}}
\newcommand{\change}[1]{{#1}}
\title{A note on invariants of mixed-state topological order in 2D}
\author{Yoshiko Ogata \\Research Institute for Mathematical Sciences\\
 Kyoto University, Kyoto 606-8502 JAPAN}
\maketitle
\begin{abstract}
The classification of mixed-state topological order requires indices that behave monotonically under
finite-depth quantum channels.
In two dimensions, a braided $C^*$-tensor category, which corresponds to strong symmetry, arises from a state satisfying approximate Haag duality.
In this note, we show that 
the $S$-matrix and topological twists of the braided $C^*$-tensor category are quantities that are monotone under finite-depth quantum channels.

\end{abstract}
\section{Introduction}
The classification of topological phases of matter has attracted significant attention over the last two decades
\cite{lu2020detecting}
\cite{Coser2019MixedStatePhases}
\cite{Wang2025IntrinsicMixedState}
\cite{Sohal2025NoisyApproach}
\cite{Ellison2025ClassificationMixedState}
\cite{Fan2024DiagnosticsMixedState}
\cite{Bao2025ErrorFieldDecoherence}
\cite{Li2024ReplicaTopologicalOrder}
\cite{sang2025stability}
\cite{Zhang2024OneFormSymmetry}\cite{MTO}.
In gapped ground states, hence pure states in $2D$, many results are now known \cite{wen2004quantum}\cite{kitaev2006anyons}\cite{wen2016theory}\cite{wen2017colloquium}
\cite{dennis2002topological}\\
\cite{konig2014generating}\cite{MTC}.
By the adiabatic theorem \cite{hastings2005quasiadiabatic}\cite{bachmann2012automorphic}\cite{moon2020automorphic}, the classification of gapped ground states
boils down to the following question: is there a quasi-local automorphism
connecting two given gapped ground states?
In two dimensions, a braided $C^*$-tensor category was derived for
systems satisfying approximate Haag duality\cite{MTC}, and this category has been shown that it is actually an invariant of the classification. 
Namely, the braided $C^*$-tensor categories of $\omega$ and $\omega\alpha$
are equivalent if $\alpha$ is a quasi-local automorphism.
This braided $C^*$-tensor category physically can be understood as anyons.
In fact, in known exactly solvable models, it has been shown that this braided $C^*$-tensor category coincides with what are called anyons in those models
\cite{N2}\cite{bols2026category}.
The Haag duality is known to hold in all such exactly solvable models \cite{N1}\cite{FN}\cite{ogata2025haag}.

For the pure state classification, it is natural to consider quasi-local automorphisms or finite-depth quantum circuit as a basic operation.
For mixed state classification, the basic operation should be finite-depth quantum channels.
And what we would expect to have are some indices which {\it decrease} under finite-depth quantum channels.
Namely, what we should find there is not an {\it equivalence relation}, but
a {\it preorder relation}.

In \cite{MTO}, we derived a braided $C^*$-tensor category 
for mixed states satisfying approximate Haag duality.
This braided $C^*$-tensor category, in the physics terminology, corresponds to the
anyons related to the strong symmetry.
We showed that the category $\caD_{\omega\Phi}$ associated to the state $\omega\Phi$
is a subcategory of the category $\caD_\omega$ associated to the state $\omega$,
if $\Phi$ is a finite-depth quantum circuit.
This seemingly shows that the braided $C^*$-tensor category is what we are looking for,
namely something that {\it decreases}.
But there is a very confusing example \cite{Ellison2025ClassificationMixedState}
\cite{sang2025mixed}\cite{LO}.
The decohered Toric code model under $X$-channels can be obtained out of
a product state. However, the corresponding braided $C^*$-tensor category is non-trivial.
The point is that the category $\caD_{\omega\Phi}$ is not a {\it full} category of $\caD_\omega$.
This example makes us wonder if the strong symmetry-based category
obtained in \cite{MTO} would tell us anything about mixed state topological order.

In this note, we claim that they do tell us something.
We show that the $S$-matrix and twists of the braided $C^*$-tensor category
are some {\it decreasing indices}.
Namely, there is some preorder relation between the $S$-matrix and twists 
of $\omega$ and that of $\omega\Phi$.
It was pointed out in \cite{Wang2025IntrinsicMixedState}
that existence of abelian anyons with nontrivial twist or braiding means the mixed state cannot be
constructed out of product states with finite-depth quantum channels.
Our result fits with this statement, and can be understood as a generalization of this statement.

Let us recall the results in \cite{MTO}.
We introduced the mixed state version of approximate Haag duality.
The precise definition of the approximate Haag duality  is given in Appendix \ref{mah}.
Here, just to reduce the burden on the reader, we just recall the genuine Haag duality, a stronger condition. 
\begin{defn}
Let $\caA$ be a $2$-dimensional quantum spin system.
Let $\omega$ be a state on $\caA$
with a GNS representation $(\caH,\pi)$.
We say $\omega$ satisfies the Haag duality, if
for any cone $\Lambda$ 
\[
\pi(\caA_{\Lambda^c})'\cap \pi\lmk \caA\rmk''=
\pi\lmk\caA_{\Lambda}\rmk''
\]
holds.
\end{defn}
By Theorem 1.10, Theorem 1.11,  Lemma 4.4 of \cite{MTO}, to each state $\omega$ satisfying (approximate) Haag duality, we can associate
a strict braided $C^*$-tensor category $\caD_\omega$, defined up to equivalence of braided $C^*$-tensor categories.
We denote by $\Delta^{\caD_\omega}$ the set of all isomorphic classes of 
simple objects with a conjugate in $\caD_{\omega}$.
For each $a,b\in \Delta^{\caD_{\omega}}$, set 
\begin{align}
\begin{split}
&\theta_{a}^\omega:= \Tr_{\tau_a\otimes\tau_a}\lmk\epsilon(\tau_a,\tau_a)\rmk,\\
&S_{a,b}^\omega:= \Tr_{\tau_a\otimes\tau_b}\lmk
\epsilon(\tau_b,\tau_a)\epsilon(\tau_a,\tau_b)\rmk.
\end{split}
\end{align}
with the braiding $\epsilon$ of  $\caD_\omega$.
Here, $\tau_a\in\Obj\caD_\omega$ is a representative of $a$.
These values do not depend on the choice of the representatives $\tau_a$, $\tau_b$.
See section \ref{sts}.
We also denote by $d^\omega(a)$ the dimension of $a\in \Delta^{\caD_\omega}$.
We introduce a $|\Delta^{\caD_\omega}|\times |\Delta^{\caD_\omega}|$-matrix
(the size $|\Delta^{\caD_\omega}|$ of $\Delta^{\caD_\omega}$ may be infinite)
\begin{align}
\begin{split}
S^\omega:=\lmk S_{a,b}^\omega\rmk_{(a,b)\in \Delta^{\caD_\omega}\times\Delta^{\caD_\omega}},
\end{split}
\end{align}
and size $|\Delta^{\caD_\omega}|$ vectors
\begin{align}
\begin{split}
\theta^\omega:=\lmk\theta_a^\omega\rmk_{a\in\Delta^{\caD_\omega}},\quad
d^{\omega}:= \lmk d^\omega(a)\rmk_{a\in\Delta^{\caD_\omega}}
\end{split}
\end{align}

Theorem 1.14 of \cite{MTO} applied to finite-depth quantum channels is the following.
\begin{thm}\label{MTO}
Let $\caA$ be a $2$-dimensional quantum spin system.
Let $\omega_1$, $\omega_2$ be 
states on  $\caA$ satisfying the approximate Haag duality.
Let $\Phi$ be a finite-depth quantum channel.
Suppose that $\omega_2=\omega_1\circ\Phi$.
Then there is a faithful unitary braided tensor functor from
$\caD_{\omega_2}$ to $\caD_{\omega_1}$.
\end{thm}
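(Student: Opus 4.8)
The plan is to reduce the statement to ingredients that are either already available or more tractable, by dilating the channel $\Phi$ to a finite-depth quantum circuit. Since $\Phi$ is a finite-depth quantum channel on $\caA$, it admits a Stinespring dilation by a circuit: there are an auxiliary $2$-dimensional quantum spin system $\caR$, a pure product state $\psi_0$ on $\caR$, and a quasi-local automorphism $\beta$ of $\hat\caA:=\caA\otimes\caR$ (a finite-depth circuit on the enlarged lattice, of depth comparable to that of $\Phi$) such that $\Phi(a)=\lmk\id_\caA\otimes\psi_0\rmk\lmk\beta(a\otimes\unit)\rmk$ for all $a\in\caA$. Setting $\hat\omega:=\lmk\omega_1\otimes\psi_0\rmk\circ\beta$, one has $\omega_2=\hat\omega|_{\caA}$, i.e.\ $\omega_2$ is the restriction of $\hat\omega$ to the sub-spin-system $\caA\subset\hat\caA$. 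The proof then proceeds in three steps: (i) $\hat\omega$ satisfies the approximate Haag duality; (ii) there is an equivalence of braided $C^*$-tensor categories $\caD_{\hat\omega}\simeq\caD_{\omega_1}$; (iii) there is a faithful unitary braided tensor functor $\caD_{\omega_2}=\caD_{\hat\omega|_{\caA}}\to\caD_{\hat\omega}$. Composing (iii) with (ii) yields the theorem.

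Steps (i) and (ii) I expect to be comparatively routine, being variations on the automorphism-invariance arguments of \cite{MTC}, \cite{MTO}. For (i): a pure product state satisfies the approximate Haag duality, the property is preserved under tensoring two such states, and it is preserved under quasi-local automorphisms. For (ii): transporting cone-localized objects along $\beta^{-1}$ gives $\caD_{\hat\omega}\simeq\caD_{\omega_1\otimes\psi_0}$ by the invariance result of \cite{MTC} (the circuit $\beta$ maps a cone-localized object to one localized in a slightly fattened cone, absorbed by the approximate Haag duality slack); and $\caD_{\omega_1\otimes\psi_0}\simeq\caD_{\omega_1}$ because the pure product factor carries no non-trivial superselection charge — concretely, the inflation functor of step (iii) applied to $\omega_1\otimes\psi_0$ is not merely faithful but essentially surjective (a cone-localized sector of $\caA\otimes\caR$ is isomorphic to one acting trivially on $\caR$), hence an equivalence, which we invert.

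Step (iii) is the heart of the matter and is where the non-invertibility of $\Phi$ is confined. Its content is that cone-localized superselection sectors of the sub-spin-system $\caA$ inflate to such sectors of $\hat\caA=\caA\otimes\caR$ by letting them act trivially on the ancilla $\caR$: in the amplimorphism/endomorphism description of the categories used in \cite{MTO}, an object of $\caD_{\omega_2}$ localized in a cone $\Lambda$ should be sent to its ``tensor product with the identity on $\caR$'', an object of $\caD_{\hat\omega}$ localized in the same cone, and an intertwiner likewise to itself tensored with that identity. Since the assignment is the identity on the ancilla factor, it is manifestly an injective $*$-functor, it intertwines composition of localized morphisms — hence the monoidal products, which are defined via transport to disjoint cones followed by composition — and it carries the braiding unitaries of $\caD_{\omega_2}$ (the statistics operators of sectors localized in mutually disjoint cones) to those of $\caD_{\hat\omega}$; consequently the twists $\theta_a^{\omega_2}$ and the $S$-matrix entries $S_{a,b}^{\omega_2}$, being traces built from the braiding, are matched to those of the image. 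What requires genuine work is to make ``tensoring with the identity on $\caR$'' precise at the operator-algebra level: one must locate the cone von Neumann algebra attached to $\omega_2=\hat\omega|_\caA$ inside that attached to $\hat\omega$ — the two are related by the compression by the projection onto $\overline{\pi_{\hat\omega}(\caA)\Omega_{\hat\omega}}\subseteq\caH_{\hat\omega}$, which need not have central support $\unit$ — and to verify, using the approximate Haag duality of both $\omega_2$ and $\hat\omega$, that a transportable cone-localized endomorphism of the former extends to one of the latter with controlled cone fattening.

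The main obstacle I anticipate is exactly this operator-algebraic bookkeeping in step (iii), together with the tracking of approximate Haag duality parameters across all three steps: the cone fattenings introduced by the dilating circuit $\beta$ in step (ii) and by the inflation in step (iii) must be shown to stay within the tolerances permitted by the approximate Haag duality of $\hat\omega$, $\omega_1\otimes\psi_0$ and $\omega_1$, so that the transported and inflated data define honest objects and intertwiners of the target categories rather than merely approximate ones. Once this is in place, the compatibility of the inflation with the monoidal product and braiding, and hence the preservation of the $S$-matrix and the twists, follows from the construction being performed cone by cone with disjoint cones in $\caA$ mapped to disjoint cones in $\hat\caA$.
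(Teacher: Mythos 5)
Your opening move---dilating the finite-depth channel so that $\Phi=(\id_\caA\otimes\psi)\circ\alpha\vert_\caA$ with $\psi$ a pure product state on an ancillary spin system and $\alpha$ a finite-depth circuit (hence approximately factorizable)---is exactly the paper's proof of this theorem. The difference is that the paper then stops: Theorem 1.14 of \cite{MTO} is stated precisely for states of the form $\omega_2=(\omega_1\otimes\psi)\circ\alpha\vert_\caA$ and directly yields the faithful unitary braided tensor functor $\caD_{\omega_2}\to\caD_{\omega_1}$. Your steps (i)--(iii) are therefore not an application of available results but an attempt to reprove that cited theorem from scratch.

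As a self-contained argument, this leaves a genuine gap: none of (i)--(iii) is actually established, and you acknowledge as much for (iii), which you correctly identify as the heart of the matter. Several assertions you treat as routine are substantive claims needing proof in the mixed-state setting of Definition \ref{AHdef}: that approximate Haag duality passes to $\omega_1\otimes\psi$ and survives conjugation by the dilating circuit; that the inflation functor attached to $\omega_1\otimes\psi$ is essentially surjective (every cone-localized sector of $\caA\otimes\caR$ is equivalent to one acting trivially on the ancilla), which you need in order to invert it and obtain the equivalence in (ii)---note that the statement being proved only supplies faithfulness, and the non-full-subcategory phenomenon recalled in the introduction warns against assuming such functors are invertible without argument; and the operator-algebraic construction in (iii), where, as you note, the GNS data of $\omega_2=\hat\omega\vert_\caA$ sit inside those of $\hat\omega$ only after a compression whose central support need not be $\unit$, so ``tensoring with the identity on the ancilla'' does not automatically produce transportable, cone-localized objects of $\caD_{\hat\omega}$. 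In short, your plan is a plausible outline of how the cited result might itself be proved, but as submitted it does not prove the theorem; the missing content is precisely what the paper imports by invoking Theorem 1.14 of \cite{MTO} after the same Stinespring dilation.
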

\begin{proof}
Applying the Steinspring dilation theorem to each local channel consisting
$\Phi$, we see that there are a $2$-dimensional quantum spin system $\caB$,
a pure infinite product state $\psi$ on $\caB$,
 and
 an approximately factorizable automorphism $\alpha$ such that
 $\Phi=(\id_\caA\otimes\psi)\alpha\vert_{\caA}$.
 Hence we may apply Theorem 1.14 of \cite{MTO}.
\end{proof}
The main result of this paper is the following.
\begin{thm}\label{main}
Let $\caA$ be a $2$-dimensional quantum spin system.
Let $\omega_1$, $\omega_2$ be 
states on  $\caA$ satisfying the approximate Haag duality.
Suppose that there exists a finite-depth quantum channel $\Phi$
such that $\omega_2=\omega_1\circ\Phi$.
Then the following hold.
\begin{description}
\item[(1)]
There exist a $|\Delta^{\caD_{\omega_2}}|\times |\Delta^{\caD_{\omega_1}}|$ matrix $X$,
$|\Delta^{\caD_{\omega_1}}|\times |\Delta^{\caD_{\omega_2}}|$ matrix $Y$,
such that
\begin{align}\label{xsy}
\begin{split}
S^{\omega_2}=X S^{\omega_1}Y,
\end{split}
\end{align}
and
\begin{align}\label{xdyd}
\begin{split}
Xd^{\omega_1}=d^{\omega_2},\quad
Y^Td^{\omega_1}=d^{\omega_2}.
\end{split}
\end{align}
Here $Y^T$ denotes transpose of the matrix $Y$.
The matrices $X$ and $Y$ have non-negative entries.
Each row of $X$ has at most a finite number of non-zero elements, and
each column of $Y$ has at most a finite number of non-zero elements.
As a result, the matrix multiplications in (\ref{xsy}) and (\ref{xdyd}) are well defined.
\item[(2)]
There exist a $|\Delta^{\caD_{\omega_2}}|\times |\Delta^{\caD_{\omega_1}}|$ matrix
$N$ such that
\begin{align}\label{ns}
\begin{split}
\theta^{\omega_2}=N\theta^{\omega_1}.
\end{split}
\end{align}
The matrix $N$ has  non-negative integer entries.
Each row of $N$ has at most a finite number of non-zero elements.
As a result, the matrix multiplication in (\ref{ns}) is well defined.
\end{description}
\end{thm}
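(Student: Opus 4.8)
The plan is to read everything off the faithful unitary braided tensor functor supplied by Theorem \ref{MTO}. By that theorem there is a faithful unitary braided tensor functor $F\colon \caD_{\omega_2}\to\caD_{\omega_1}$, and I will use the following standard categorical facts about it. Since $F$ is a $*$-functor and is faithful, it is isometric on all morphism spaces. Since it is a unitary tensor functor, it carries any solution of the conjugate equations for an object $\tau$ to a solution for $F(\tau)$ of the same norms (composing with the unitary tensor structure isomorphism and the unit isomorphism), and it sends standard solutions to standard solutions; hence $F$ preserves the intrinsic dimension and, via the canonical spherical structure attached to standard solutions, the categorical trace $\Tr$. Finally, being a braided tensor functor, $F$ intertwines the braidings: transported through the tensor structure isomorphism, $F$ applied to $\epsilon(\tau,\sigma)$ becomes $\epsilon(F\tau,F\sigma)$.

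Next I set up the matrices. Fix representatives $\tau_a$ ($a\in\Delta^{\caD_{\omega_2}}$) and $\sigma_c$ ($c\in\Delta^{\caD_{\omega_1}}$). For each $a$ the object $\tau_a$ has a conjugate, hence so does $F(\tau_a)$, and therefore $F(\tau_a)$ is isomorphic to a finite orthogonal direct sum of simple objects with conjugates, say $F(\tau_a)\cong\bigoplus_{c} m_{a,c}\,\sigma_c$ with $m_{a,c}=\dim\Hom(\sigma_c,F(\tau_a))\in\{0,1,2,\dots\}$ and only finitely many $c$ with $m_{a,c}\neq 0$. Put $M=(m_{a,c})$ and take $X:=M$, $N:=M$, $Y:=M^{T}$. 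These have non-negative (indeed integer) entries, each row of $X$ and $N$ and each column of $Y$ has finite support, so all matrix products in (\ref{xsy}), (\ref{xdyd}), (\ref{ns}) converge.

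It remains to verify the three identities. For dimensions, $F$ preserves $d$, so $d^{\omega_2}(a)=d^{\omega_1}(F(\tau_a))=\sum_c m_{a,c}\,d^{\omega_1}(c)$, which is exactly (\ref{xdyd}) for both $X$ and $Y^{T}$. For the twists and the $S$-matrix I use an elementary additivity lemma: if $Z=\bigoplus_i Z_i$ and $W=\bigoplus_j W_j$ are finite orthogonal sums of simples, then naturality of the braiding shows that $\epsilon(Z,Z)$ sends the $(i,j)$ summand of $Z\otimes Z$ to the $(j,i)$ summand, and $\epsilon(W,Z)\epsilon(Z,W)$ sends the $(i,j)$ summand of $Z\otimes W$ back to itself, so additivity of $\Tr$ over summands gives $\Tr_{Z\otimes Z}(\epsilon(Z,Z))=\sum_i\Tr_{Z_i\otimes Z_i}(\epsilon(Z_i,Z_i))$ and $\Tr_{Z\otimes W}(\epsilon(W,Z)\epsilon(Z,W))=\sum_{i,j}\Tr_{Z_i\otimes W_j}(\epsilon(W_j,Z_i)\epsilon(Z_i,W_j))$. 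Now apply $F$ to the defining expressions for $\theta_a^{\omega_2}$ and $S_{a,b}^{\omega_2}$, transport through the tensor structure isomorphism, use that $F$ preserves braiding and trace, substitute $F(\tau_a)\cong\bigoplus_c m_{a,c}\sigma_c$ and $F(\tau_b)\cong\bigoplus_d m_{b,d}\sigma_d$, and invoke the lemma together with iso-invariance of these scalars to obtain $\theta_a^{\omega_2}=\sum_c m_{a,c}\theta_c^{\omega_1}$ and $S_{a,b}^{\omega_2}=\sum_{c,d}m_{a,c}m_{b,d}S_{c,d}^{\omega_1}$, i.e.\ (\ref{ns}) and (\ref{xsy}).

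The bookkeeping above is routine; the ingredient that needs care is the package of categorical facts in the first paragraph --- that a faithful unitary braided tensor functor between $C^*$-tensor categories is isometric on $\Hom$-spaces, intertwines the braidings, and preserves the intrinsic dimension and the canonical spherical trace (equivalently, sends standard solutions of the conjugate equations to standard ones). This is exactly where faithfulness of $F$, rather than the mere existence of a braided tensor functor as in the ``not full'' toric-code example of the introduction, is used, and I expect it to be the main point to pin down, either by citing the literature on unitary/$C^*$ tensor categories or by verifying it directly from the conjugate-equation characterization of $\Tr$.
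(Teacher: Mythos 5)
Your overall strategy is the same as the paper's (push everything through the faithful unitary braided tensor functor $F$ of Theorem \ref{MTO} and decompose $F(\tau_a)$ into simples of $\caD_{\omega_1}$), but the ``package of categorical facts'' you place in your first paragraph and defer to the end is not a standard fact, is not proved by you, and is false in general: a faithful unitary tensor functor need \emph{not} send standard solutions of the conjugate equations to standard solutions, and need not preserve the intrinsic dimension or the categorical trace. The characterization of standardness (e.g.\ $R^*(\id\otimes T)R=\bar R^*(T\otimes\id)\bar R$ for \emph{all} $T\in\End(\rho)$) quantifies over the whole endomorphism algebra of the target object, and since $F$ is not full, $\End_{\caD_{\omega_1}}(F(\tau_a))$ is generally strictly larger than $F(\End_{\caD_{\omega_2}}(\tau_a))$; the transported solution is only known to be \emph{some} solution of the conjugate equations (this is exactly Lemma \ref{rr}). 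The forgetful (fiber) functor on $\mathrm{Rep}(SU_q(2))$ is a faithful unitary tensor functor that changes dimension from $q+q^{-1}$ to $2$, so nothing of this sort can be cited as a general principle. This gap is not cosmetic: it is precisely why the theorem asserts only non-negative \emph{real} entries for $X$ and $Y$ and does not claim $Y=X^{T}$, whereas your argument would yield the stronger (and in general unjustified) identities $S^{\omega_2}=MS^{\omega_1}M^{T}$ and $Md^{\omega_1}=d^{\omega_2}$ with $M$ the integer multiplicity matrix.

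The paper's proof is organized specifically to avoid this point. Lemma \ref{rr} transports an arbitrary solution; Lemma \ref{rtr} shows the pairing $\lmk R^*\otimes\bar R^*\rmk(\cdots)\lmk R\otimes\bar R\rmk$ is preserved under this transport; and Lemma \ref{lem2} evaluates that pairing for a \emph{non-standard} solution, using Proposition 2.2.5 of \cite{NT} to write it as a standard solution twisted by invertibles $T,W$, which produces the positive real weights $t_a^\rho=d(a)^{-1}\Tr_\rho(p_a^\rho T^*T)$ and $s_b^\sigma$ constrained only by the sum rules $\sum_a d(a)t_a^\rho=\lV R_\rho\rV^2$, $\sum_b d(b)s_b^\sigma=\lV\bar R_\sigma\rV^2$; applying this with a standard solution upstairs (so $\lV R\rV^2=d^{\omega_2}(\zeta)$) gives \eqref{xsy} and \eqref{xdyd}. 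Your conclusion for the twist, $N=M$ with integer entries, does agree with the paper's, but there the integrality comes from the cancellation of $T^*T$ against $(T^*T)^{-1}$ in the computation of the second lemma of Section \ref{sts}, not from trace preservation. To repair your argument you would either have to reproduce this non-standard-solution analysis, or actually prove that the specific $F$ of Theorem \ref{MTO} preserves standard solutions, which is an additional claim requiring its own proof.
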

Let $\mathfrak S$ be the set of all (possibly infinite) square matrices and 
$\mathfrak V$ be the set of all (possibly infinite) vectors.
Let $\mathfrak R$ be the set of all (not necessarily square) matrices whose rows have at most a
finite number of non-zero
entries, and $\mathfrak R_{+}$ be the set of all elements in $\mathfrak R$
with non-negative entries, while $\mathfrak R_{\bbN}$ be the set of all elements in $\mathfrak R$
with non-negative integer entries.
Note that $\mathfrak R_{+}$ is closed under the multiplication that is defined when the
dimensions of the matrices fit.
Similarly, $\mathfrak R_{\bbN}$ is closed under the multiplication.

For each $S\in \mathfrak S$, we denote by $\nu_S$ the size of the matrix $S$.
On $\mathfrak S$, we introduce the relation $\prec_s$
by
$S_2\prec_s S_1$ if there exist 
$\nu_{S_2}\times \nu_{S_1}$ matrices $X, Y\in \mathfrak R_{+}$
such that 
$S_2=X S_1 Y^T$.
On $\mathfrak V$,  we introduce the relation $\prec_t$
by 
$\theta_2\prec_t\theta_1$ if there exists a  matrix
$N\in \mathfrak R_{\bbN}$ such that
$\theta_2=N\theta_1$.
Because $\mathfrak R_{+}$ and $\mathfrak R_{\bbN}$ are closed under multiplication,
both $\prec_s$ and $\prec_t$ are preorders.
Theorem \ref{main} says that, with respect to this preorder, we have
$S^{\omega_2}\prec_s S^{\omega_1}$
and $\theta^{\omega_2}\prec_t \theta^{\omega_1}$.

\section{$S$-matrix and twist}\label{sts}
First we recall basic facts from \cite{NT}.
Let $\caD$ be a strict braided $C^*$-tensor category. We occasionally denote by $[\rho]$
the isomorphic class of an object $\rho$ in $\caD$.
An object $\bar\rho\in\Obj \caD$ is said to be conjugate to $\rho\in\Obj\caD$
if there exist morphisms $R\in\Mor_{\caD}\lmk\unit, \bar\rho\otimes\rho\rmk$
and $\bar R\in\Mor_{\caD}\lmk\unit, \rho\otimes\bar\rho\rmk$
such that
\begin{align}\label{ce}
\begin{split}
&\lmk\bar R^*\otimes\id_\rho\rmk\lmk \id_\rho\otimes R\rmk=\id_\rho,
\quad \lmk R^*\otimes\id_{\bar\rho}\rmk\lmk \id_{\bar\rho}\otimes \bar R\rmk
=\id_{\bar\rho}.
\end{split}
\end{align}
In this case, we say $(R,\bar R)$ is a solution of the conjugate equations for $(\rho,\bar\rho)$.
We denote by $\Delta^\caD$ the set of all isomorphic classes of 
simple objects  with a conjugate in $\caD$.
We fix a representative ${\tau_c}$ for each $c\in\Delta^{\caD}$.

If $\rho\in\Obj\caD$ has a conjugate, its endomorphism space $\End_{\caD}(\rho)$ is finite dimensional
(Proposition 2.2.8 \cite{NT}). 
Therefore, $\rho$ can be written as a direct sum of simple objects. Namely,
there are a finite subset $\Delta_\rho\subset \Delta^{\caD}$ and 
 isometries $u_c^\rho(\mu)\in\Mor_{\caD}(\tau_c,\rho)$, $c\in\Delta_\rho$, $\mu=1,\ldots,n_{\rho,c}$ with
multiplicity $n_{\rho,c}\in \bbN$,
such that
\begin{align}\label{decom}
\begin{split}
\id_{\rho}=\sum_{c\in\Delta_\rho}\sum_{\mu=1}^{n_{\rho,c}}
u_c^\rho(\mu)u_c^\rho(\mu)^*.
\end{split}
\end{align}
Recall that the class of objects in $\caD$ that have conjugates forms a $C^*$-tensor subcategory of $\caD$
(See Proposition 2.2.10\cite{NT}).
A solution $(R,\bar R)$ of the conjugate equations for $(\rho,\bar\rho)$
is called standard if it is of the form
\begin{align}
\begin{split}
R=\sum_{c\in\Delta_\rho}\sum_{\mu=1}^{n_{\rho,c}}\lmk\bar u_c^\rho(\mu)\otimes u_c^\rho(\mu)\rmk R_c,\\
\bar R=\sum_{c\in\Delta_\rho}\sum_{\mu=1}^{n_{\rho,c}}\lmk u_c^\rho(\mu)\otimes \bar u_c^\rho(\mu)\rmk \bar R_c,
\end{split}
\end{align}
with $(R_c, \bar R_c)$, a solution of the conjugate equations for $(\tau_c,\bar\tau_c)$
with $\lV R_c\rV=\lV \bar R_c\rV=d(c)$. Here, $d(c)$ denotes the dimension of $c$.
Here, $\bar u_c^\rho(\mu)\in\Mor_{\caD}(\bar \tau_c,\bar\rho)$ are isometries with 
$\sum_{c\in\Delta_\rho}\sum_{\mu=1}^{n_{\rho,c}} \bar u_c^\rho(\mu)\bar u_c^\rho(\mu)^*=\id_{\bar\rho}$.
A standard solution $(R,\bar R)$ defines a positive faithful trace on the finite dimensional $C^*$-algebra
$\End_{\caD}(\rho)$ via
\begin{align}
\begin{split}
\Tr_{\rho} (T)=R^*\lmk\id_{\bar\rho}\otimes T\rmk R=\bar R^*\lmk T\otimes \id_{\rho}\rmk \bar R,\quad
T\in \End_{\caD}(\rho).
\end{split}
\end{align}
This functional does not depend on the choice of the standard solution $(R,\bar R)$. (See Theorem 2.2.16 \cite{NT}.)
If
$(R_\rho,\bar R_\rho)$, $(R_\sigma,\bar R_\sigma)$ are standard solutions of the conjugate equations for
$(\rho,\bar\rho)$, $(\sigma,\bar\sigma)$ respectively,
then we have
\begin{align}\label{tracef}
\begin{split}
\Tr_{\rho\otimes\sigma}\lmk
T
\rmk
=\lmk R_\rho^*\otimes\bar R_\sigma^*\rmk
\lmk\id_{\bar\rho}\otimes T\otimes\id_{\bar\sigma}\rmk
\lmk  R_\rho \otimes\bar R_\sigma\rmk
\end{split}
\end{align}
(See proof of Theorem 2.2.18 \cite{NT}.)
%
\begin{defn}
Let $\epsilon$ be the braiding of  $\caD$.
For any isomorphic classes $[\rho],[\sigma]$ with conjugate in $\Obj \caD$, we set
\begin{align}
\begin{split}
&\theta_{[\rho]}:= \Tr_{\rho\otimes\rho}\lmk\epsilon(\rho,\rho)\rmk,\\
&S_{[\rho],[\sigma]}:= \Tr_{\rho\otimes\sigma}\lmk
\epsilon(\sigma,\rho)\epsilon(\rho,\sigma)\rmk.
\end{split}
\end{align}
These values do not depend on the choice of the representatives $\rho$, $\sigma$
due to the tracial property of $\Tr$ and the naturality of $\epsilon$.
Normalizations of them are called S-matrix and twist. 
\end{defn}
\begin{lem}\label{lem2}
Let $(R_\rho,\bar R_\rho)$, $(R_\sigma,\bar R_\sigma)$ be (not necessarily standard) solutions of the conjugate equations for
$(\rho,\bar\rho)$, $(\sigma,\bar\sigma)$ respectively.
Then there are positive numbers $t_a^\rho$, $s_b^\sigma$ for $a\in \Delta_\rho$,  $b\in\Delta_\sigma$
such that 
\begin{align}
\begin{split}
&\lmk R_\rho^*\otimes \bar R_\sigma^*\rmk
\lmk\id_{\bar\rho}\otimes \epsilon(\sigma,\rho)\epsilon(\rho,\sigma)\otimes \id_{\bar\sigma}\rmk
\lmk R_\rho\otimes \bar R_\sigma\rmk=\sum_{a\in\Delta_\rho, b\in \Delta_{\sigma}} t_a^\rho s_b^\sigma S_{a,b}
\end{split}
\end{align}
and
\begin{align}
\sum_{a\in\Delta_\rho} d(a) t_a^\rho=\lV R_\rho\rV^2,\quad
\sum_{b\in\Delta_\sigma} d(b) s_b^\sigma=\lV \bar R_\sigma\rV^2.
\end{align}
\end{lem}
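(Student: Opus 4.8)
The plan is to reduce to the case of standard solutions, where by (\ref{tracef}) the left-hand side is exactly $\Tr_{\rho\otimes\sigma}\lmk\epsilon(\sigma,\rho)\epsilon(\rho,\sigma)\rmk$, and then to evaluate this trace by decomposing $\rho$ and $\sigma$ into simple objects. For the reduction I would invoke the standard description of the solutions of the conjugate equations (see \cite{NT}): for a fixed pair $(\rho,\bar\rho)$ that admits a solution there is also a \emph{standard} one $(R_\rho^0,\bar R_\rho^0)$ for the same pair (any conjugate of $\rho$ is unitarily isomorphic to the conjugate object underlying a standard solution, and a unitary on the conjugate leg preserves standardness), and every solution $(R_\rho,\bar R_\rho)$ for $(\rho,\bar\rho)$ is of the form $R_\rho=(\id_{\bar\rho}\otimes g_\rho)R_\rho^0$, $\bar R_\rho=((g_\rho^*)^{-1}\otimes\id_{\bar\rho})\bar R_\rho^0$ for some invertible $g_\rho\in\End_{\caD}(\rho)$; likewise $R_\sigma=(\id_{\bar\sigma}\otimes g_\sigma)R_\sigma^0$, $\bar R_\sigma=((g_\sigma^*)^{-1}\otimes\id_{\bar\sigma})\bar R_\sigma^0$ for some invertible $g_\sigma\in\End_{\caD}(\sigma)$ and a standard $(R_\sigma^0,\bar R_\sigma^0)$. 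Substituting these into the left-hand side of the asserted identity, pushing the $g$'s through the strict tensor product so that they act on the middle factor $\rho\otimes\sigma$ beside $\epsilon(\sigma,\rho)\epsilon(\rho,\sigma)$, applying (\ref{tracef}) to the standard solutions $(R_\rho^0,\bar R_\rho^0)$ and $(R_\sigma^0,\bar R_\sigma^0)$, and using the tracial property of $\Tr_{\rho\otimes\sigma}$, the left-hand side becomes $\Tr_{\rho\otimes\sigma}\lmk(P_\rho\otimes Q_\sigma)\,\epsilon(\sigma,\rho)\epsilon(\rho,\sigma)\rmk$ with $P_\rho:=g_\rho g_\rho^*\in\End_{\caD}(\rho)$ and $Q_\sigma:=(g_\sigma g_\sigma^*)^{-1}\in\End_{\caD}(\sigma)$, both positive and invertible.

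To evaluate this trace I would expand $P_\rho$ and $Q_\sigma$ along (\ref{decom}): since $\Mor_{\caD}(\tau_b,\tau_a)$ is $\bbC\,\id_{\tau_a}$ when $a=b$ and $\{0\}$ otherwise, $P_\rho=\sum_{a\in\Delta_\rho}\sum_{\mu,\nu} p_{a;\mu\nu}\,u_a^\rho(\mu)u_a^\rho(\nu)^*$ and $Q_\sigma=\sum_{b\in\Delta_\sigma}\sum_{\xi,\eta} q_{b;\xi\eta}\,u_b^\sigma(\xi)u_b^\sigma(\eta)^*$, where each matrix $(p_{a;\mu\nu})_{\mu,\nu}$ and $(q_{b;\xi\eta})_{\xi,\eta}$ is positive definite, being a block of a positive invertible operator. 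Naturality of the braiding gives $(u_a^\rho(\nu)\otimes u_b^\sigma(\eta))^*\,\epsilon(\sigma,\rho)\epsilon(\rho,\sigma)=\epsilon(\tau_b,\tau_a)\epsilon(\tau_a,\tau_b)\,(u_a^\rho(\nu)\otimes u_b^\sigma(\eta))^*$. Inserting the expansions, replacing the middle double braiding via this identity, applying $\Tr_\lambda(XY)=\Tr_\kappa(YX)$ (for $X\in\Mor_{\caD}(\kappa,\lambda)$, $Y\in\Mor_{\caD}(\lambda,\kappa)$) to turn each resulting term into a trace over $\tau_a\otimes\tau_b$, and using $(u_a^\rho(\nu)\otimes u_b^\sigma(\eta))^*(u_a^\rho(\mu)\otimes u_b^\sigma(\xi))=\delta_{\mu\nu}\delta_{\xi\eta}\,\id_{\tau_a\otimes\tau_b}$, everything collapses to
\[
\Tr_{\rho\otimes\sigma}\lmk(P_\rho\otimes Q_\sigma)\,\epsilon(\sigma,\rho)\epsilon(\rho,\sigma)\rmk=\sum_{a\in\Delta_\rho,\,b\in\Delta_\sigma}\lmk\sum_{\mu}p_{a;\mu\mu}\rmk\lmk\sum_{\xi}q_{b;\xi\xi}\rmk\Tr_{\tau_a\otimes\tau_b}\lmk\epsilon(\tau_b,\tau_a)\epsilon(\tau_a,\tau_b)\rmk .
\]
The last trace is $S_{a,b}$ by definition, so $t_a^\rho:=\sum_{\mu}p_{a;\mu\mu}$ and $s_b^\sigma:=\sum_{\xi}q_{b;\xi\xi}$ give the first assertion; they are strictly positive, being traces of positive definite matrices.

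For the normalization, using the first form $\Tr_\rho(T)=R_\rho^{0*}(\id_{\bar\rho}\otimes T)R_\rho^0$ of the trace functional of the standard solution together with tracial invariance, $\lV R_\rho\rV^2=R_\rho^* R_\rho=\Tr_\rho(g_\rho^* g_\rho)=\Tr_\rho(g_\rho g_\rho^*)=\Tr_\rho(P_\rho)$; then, expanding $\Tr_\rho(P_\rho)=\sum_{a\in\Delta_\rho}\sum_\mu\Tr_{\tau_a}\lmk u_a^\rho(\mu)^* P_\rho u_a^\rho(\mu)\rmk$ (again by $\Tr_\lambda(XY)=\Tr_\kappa(YX)$, with $X=P_\rho u_a^\rho(\mu)$ and $Y=u_a^\rho(\mu)^*$) and using $\Tr_{\tau_a}(\id_{\tau_a})=d(a)$, one gets $\Tr_\rho(P_\rho)=\sum_{a\in\Delta_\rho}d(a)\sum_\mu p_{a;\mu\mu}=\sum_{a\in\Delta_\rho}d(a)\,t_a^\rho$. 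The identity $\lV\bar R_\sigma\rV^2=\sum_{b\in\Delta_\sigma}d(b)\,s_b^\sigma$ is obtained in the same way from the second form $\Tr_\sigma(T)=\bar R_\sigma^{0*}(T\otimes\id_\sigma)\bar R_\sigma^0$: $\lV\bar R_\sigma\rV^2=\bar R_\sigma^*\bar R_\sigma=\Tr_\sigma\lmk(g_\sigma^* g_\sigma)^{-1}\rmk=\Tr_\sigma\lmk(g_\sigma g_\sigma^*)^{-1}\rmk=\Tr_\sigma(Q_\sigma)=\sum_{b\in\Delta_\sigma}d(b)\,s_b^\sigma$. The one step that is not pure bookkeeping is the first --- invoking the precise structure of the solution set of the conjugate equations for a fixed pair, and producing its standard representative; everything afterwards is routine manipulation with (\ref{tracef}), (\ref{decom}), naturality of $\epsilon$, and traciality, and the main danger is keeping careful track of which tensor leg each $g$, adjoint, and inverse acts on.
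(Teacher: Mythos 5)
Your proposal is correct and takes essentially the same route as the paper: reduce to a standard solution for the same pair via the parametrization of all solutions (Proposition 2.2.5 of \cite{NT}, your $g_\rho$ being the paper's $T^*$), apply (\ref{tracef}), decompose with (\ref{decom}) and the naturality of $\epsilon$, and identify $t_a^\rho$, $s_b^\sigma$ as strictly positive by faithfulness/positivity of the categorical trace. Your matrix coefficients $p_{a;\mu\nu}$, $q_{b;\xi\eta}$ with off-diagonal terms killed by orthogonality are just a different bookkeeping of the paper's scalars $d(a)^{-1}\Tr_{\rho}\lmk p_a^\rho T^*T\rmk$ and $d(b)^{-1}\Tr_{\sigma}\lmk p_b^\sigma (W^*W)^{-1}\rmk$, and the normalization computation coincides as well.
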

\begin{proof}
Let $(R'_\rho,\bar R'_\rho)$, $(R'_\sigma,\bar R'_\sigma)$ be standard solutions of the conjugate equations for
$(\rho,\bar\rho)$, $(\sigma,\bar\sigma)$ respectively.
By Proposition 2.2.5 \cite{NT} for $(\bar\rho,\rho)$, there exist an invertible
$T\in \End_{\caD}(\rho)$ and an invertible $W\in\End_{\caD}(\sigma)$
such that
\begin{align}
\begin{split}
&R_\rho=\lmk \id_{\bar\rho}\otimes T^*\rmk R'_\rho,\quad
\bar R_\rho=\lmk T^{-1} \otimes\id_{\bar\rho} \rmk \bar R'_\rho,\\
&R_\sigma=\lmk \id_{\bar\sigma}\otimes W^*\rmk R'_\sigma,\quad
\bar R_\sigma=\lmk W^{-1} \otimes\id_{\bar\sigma} \rmk \bar R'_\sigma.
\end{split}
\end{align}
 Substituting those, using (\ref{tracef}), we have
 \begin{align}
\begin{split}
&\lmk R_\rho^*\otimes \bar R_\sigma^*\rmk
\lmk\id_{\bar\rho}\otimes \epsilon(\sigma,\rho)\epsilon(\rho,\sigma)\otimes \id_{\bar\sigma}\rmk
\lmk R_\rho\otimes \bar R_\sigma\rmk\\
&=
\lmk {R'_\rho}^*\otimes \lmk{\bar R}'_\sigma\rmk^*\rmk
\lmk\id_{\bar\rho}\otimes 
\lmk T\otimes  {W^*}^{-1} \rmk
\epsilon(\sigma,\rho)\epsilon(\rho,\sigma)
\lmk T^*\otimes  W^{-1} \rmk
\otimes \id_{\bar\sigma}\rmk
\lmk R'_\rho\otimes \bar R'_\sigma\rmk\\
&=\Tr_{\rho\otimes\sigma}
\lmk
\lmk T\otimes  {W^*}^{-1} \rmk
\epsilon(\sigma,\rho)\epsilon(\rho,\sigma)
\lmk T^*\otimes  W^{-1} \rmk
\rmk\\
&=\Tr_{\rho\otimes\sigma}
\lmk
\lmk T^*T\otimes  ({W^* W})^{-1} \rmk
\epsilon(\sigma,\rho)\epsilon(\rho,\sigma)
\rmk
\end{split}
\end{align}
Note from the naturality of $\epsilon$ and the decomposition (\ref{decom}),
we have
\begin{align}
\begin{split}
\epsilon(\sigma,\rho)\epsilon(\rho,\sigma)
=\sum_{a\in \Delta_\rho}\sum_{b\in \Delta_\sigma}
\sum_{\mu=1}^{n_{\rho,a}}\sum_{\nu=1}^{n_{\sigma,b}}
\lmk
u_a^\rho(\mu)\otimes u_b^\sigma(\nu)
\rmk
\epsilon\lmk \tau_b, \tau_a\rmk
\epsilon\lmk \tau_a, \tau_b\rmk
\lmk
u_a^\rho(\mu)\otimes u_b^\sigma(\nu)
\rmk^*.
\end{split}
\end{align}
Substituting this, we obtain
\begin{align}\label{mame}
\begin{split}
&\lmk R_\rho^*\otimes \bar R_\sigma^*\rmk
\lmk\id_{\bar\rho}\otimes \epsilon(\sigma,\rho)\epsilon(\rho,\sigma)\otimes \id_{\bar\sigma}\rmk
\lmk R_\rho\otimes \bar R_\sigma\rmk\\
&=\sum_{a\in \Delta_\rho}\sum_{b\in \Delta_\sigma}
\sum_{\mu=1}^{n_{\rho,a}}\sum_{\nu=1}^{n_{\sigma,b}}
\Tr_{\rho\otimes\sigma}
\lmk
\lmk T^*T\otimes  ({W^* W})^{-1} \rmk
\lmk \lmk
u_a^\rho(\mu)\otimes u_b^\sigma(\nu)
\rmk
\epsilon\lmk \tau_b, \tau_a\rmk
\epsilon\lmk \tau_a, \tau_b\rmk
\lmk
u_a^\rho(\mu)\otimes u_b^\sigma(\nu)
\rmk^*\rmk
\rmk\\
&=
\sum_{a\in \Delta_\rho}\sum_{b\in \Delta_\sigma}
\lmk \sum_{\mu=1}^{n_{\rho,a}} \lmk u_a^\rho(\mu)\rmk^*T^*T u_a^\rho(\mu)\rmk\cdot 
\lmk \sum_{\nu=1}^{n_{\sigma,b}}\lmk u_b^\sigma(\nu)\rmk^*({W^* W})^{-1}u_b^\sigma(\nu)\rmk
\Tr_{\tau_a\otimes\tau_b}\lmk \epsilon\lmk \tau_b, \tau_a\rmk
\epsilon\lmk \tau_a, \tau_b\rmk\rmk.
\end{split}
\end{align}
Here, we used the fact that $\lmk u_a^\rho(\mu)\rmk^* T^*T u_a^\rho(\mu)\in \End_{\caD}(\tau_a)$, $\lmk u_b^\sigma(\nu)\rmk^*({W^* W})^{-1}u_b^\sigma(\nu)\in \End_{\caD}(\tau_b)$
can be regarded as a scalar because of the irreducibility of $\tau_a$, $\tau_b$.
Note that
\begin{align}
\begin{split}
 &\sum_{\mu=1}^{n_{\rho,a}} \lmk u_a^\rho(\mu)\rmk^* T^*T u_a^\rho(\mu)\\
 &=\sum_{\mu=1}^{n_{\rho,a}} d(a)^{-1} \Tr_{\tau_a}\lmk \lmk u_a^\rho(\mu)\rmk^*T^*T u_a^\rho(\mu)\rmk\\
&=\sum_{\mu=1}^{n_{\rho,a}} d(a)^{-1}
\Tr_{\rho}\lmk u_a^\rho(\mu) \lmk u_a^\rho(\mu)\rmk^*T^*T \rmk \\
&=d(a)^{-1}\Tr_{\rho} p_a^\rho T^*T\\
&=:t_a^\rho
 \end{split}
\end{align}
Here, $p_a^\rho:=\sum_{\mu=1}^{n_{\rho,a}}   u_a^\rho(\mu)\lmk u_a^\rho(\mu)\rmk^*$ is the central projection
of $\End_{\caD}(\rho)$ corresponding to $a\in\Delta_\rho$.
Similarly, we have
\begin{align}
\begin{split}
&\sum_{\nu=1}^{n_{\sigma,b}}
\lmk u_b^\sigma(\nu)\rmk^*({W^* W})^{-1}u_b^\sigma(\nu)
=d(b)^{-1}\Tr_{\sigma}\lmk p_b^{\sigma}
({W^* W})^{-1}\rmk\\
&=:s_b^\sigma.
\end{split}
\end{align}
Because $\Tr$ is faithful, 
$t_a^\rho, s_b^\sigma>0$ and
\begin{align}
\begin{split}
&\lmk R_\rho^*\otimes \bar R_\sigma^*\rmk
\lmk\id_{\bar\rho}\otimes \epsilon(\sigma,\rho)\epsilon(\rho,\sigma)\otimes \id_{\bar\sigma}\rmk
\lmk R_\rho\otimes \bar R_\sigma\rmk
=\sum_{a\in\Delta_\rho, b\in \Delta_{\sigma}} t_a^\rho s_b^\sigma S_{a,b}.
\end{split}
\end{align}
From the definition, we have
\begin{align}
\begin{split}
&\sum_{a\in \Delta_\rho}d(a)t_a^\rho= \sum_{a\in \Delta_\rho}\Tr_{\rho} p_a^\rho T^*T=\Tr_{\rho}T^*T
=\lV R_\rho\rV^2,\\
&\sum_{b\in \Delta_\rho}d(b)s_b^\sigma=\sum_{b\in \Delta_\rho}\Tr_{\sigma}\lmk p_b^{\sigma}
({W^* W})^{-1}\rmk
=\Tr_{\sigma}\lmk 
({W^* W})^{-1}\rmk=\lV \bar R_\sigma\rV^2.
\end{split}
\end{align}
\end{proof}
\begin{lem}
Let $(R_\rho,\bar R_\rho)$ be a (not necessarily standard) solution of the conjugate equation for
$(\rho,\bar\rho)$.
Then 
\begin{align}
\begin{split}
&\lmk R_\rho^*\otimes \bar R_\rho^*\rmk
\lmk\id_{\bar\rho}\otimes \epsilon(\rho,\rho)\otimes \id_{\bar\rho}\rmk
\lmk R_\rho\otimes \bar R_\rho\rmk
=\sum_{a\in \Delta_\rho}n_{\rho,a}\theta(a).
\end{split}
\end{align}
\end{lem}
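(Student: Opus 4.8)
The plan is to run the argument of the proof of Lemma~\ref{lem2} essentially verbatim, the one difference being that the single braiding $\epsilon(\rho,\rho)$ is natural in \emph{both} of its arguments, and this will make the right-hand side independent of the chosen solution $(R_\rho,\bar R_\rho)$ --- in contrast with Lemma~\ref{lem2}, where the analogous quantity carried the weights $t_a^\rho,s_b^\sigma$. First I would treat the standard case: if $(R_\rho,\bar R_\rho)$ is standard, then (\ref{tracef}) with $\sigma=\rho$ and $T=\epsilon(\rho,\rho)$ gives immediately
\[
\lmk R_\rho^*\otimes \bar R_\rho^*\rmk\lmk\id_{\bar\rho}\otimes \epsilon(\rho,\rho)\otimes \id_{\bar\rho}\rmk\lmk R_\rho\otimes \bar R_\rho\rmk=\Tr_{\rho\otimes\rho}\lmk\epsilon(\rho,\rho)\rmk .
\]
To evaluate the right-hand side, I would decompose $\id_\rho$ in each tensor leg by (\ref{decom}) and use naturality of $\epsilon$ to write
\[
\epsilon(\rho,\rho)=\sum_{a,b\in\Delta_\rho}\sum_{\mu=1}^{n_{\rho,a}}\sum_{\nu=1}^{n_{\rho,b}}\lmk u_b^\rho(\nu)\otimes u_a^\rho(\mu)\rmk\epsilon(\tau_a,\tau_b)\lmk u_a^\rho(\mu)^*\otimes u_b^\rho(\nu)^*\rmk ,
\]
apply $\Tr_{\rho\otimes\rho}$, and move the outer isometries around using the cyclicity of the categorical trace (already invoked in the proof of Lemma~\ref{lem2}) to reduce each summand to $\Tr_{\tau_b\otimes\tau_a}\lmk\epsilon(\tau_a,\tau_b)\lmk u_a^\rho(\mu)^*u_b^\rho(\nu)\otimes u_b^\rho(\nu)^*u_a^\rho(\mu)\rmk\rmk$. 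Because $\tau_a,\tau_b$ are simple, $u_a^\rho(\mu)^*u_b^\rho(\nu)$ vanishes unless $a=b$, in which case orthogonality of the isometries $u_a^\rho(\cdot)$ in (\ref{decom}) forces $u_a^\rho(\mu)^*u_a^\rho(\nu)=\delta_{\mu\nu}\id_{\tau_a}$. Hence only the terms with $a=b$, $\mu=\nu$ survive, each contributing $\Tr_{\tau_a\otimes\tau_a}(\epsilon(\tau_a,\tau_a))=\theta(a)$, and we obtain $\Tr_{\rho\otimes\rho}(\epsilon(\rho,\rho))=\sum_{a\in\Delta_\rho}n_{\rho,a}\theta(a)$, which is the assertion for standard solutions.

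For a general solution $(R_\rho,\bar R_\rho)$, I would mimic the first half of the proof of Lemma~\ref{lem2}: fix a standard solution $(R'_\rho,\bar R'_\rho)$ and, by Proposition~2.2.5 of \cite{NT}, an invertible $T\in\End_\caD(\rho)$ with $R_\rho=(\id_{\bar\rho}\otimes T^*)R'_\rho$ and $\bar R_\rho=(T^{-1}\otimes\id_{\bar\rho})\bar R'_\rho$. Substituting, and using (\ref{tracef}) together with the trace property, reduces the left-hand side to $\Tr_{\rho\otimes\rho}\lmk (T^*T\otimes (T^*T)^{-1})\,\epsilon(\rho,\rho)\rmk$. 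Here the single braiding does the work: with $A:=T^*T$, the naturality identity $\epsilon(\rho,\rho)(A\otimes\id_\rho)=(\id_\rho\otimes A)\epsilon(\rho,\rho)$ and cyclicity of the trace yield
\[
\Tr_{\rho\otimes\rho}\lmk (A\otimes A^{-1})\epsilon(\rho,\rho)\rmk=\Tr_{\rho\otimes\rho}\lmk (A\otimes\id_\rho)\epsilon(\rho,\rho)(A^{-1}\otimes\id_\rho)\rmk=\Tr_{\rho\otimes\rho}(\epsilon(\rho,\rho)),
\]
so the left-hand side again equals $\sum_{a\in\Delta_\rho}n_{\rho,a}\theta(a)$ by the standard case.

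The one step that needs genuine care is this last cancellation of $A=T^*T$: it is precisely where the computation diverges from Lemma~\ref{lem2}, and it relies on the braiding being natural in both arguments simultaneously, so that conjugating $\epsilon(\rho,\rho)$ by $A\otimes A^{-1}$ leaves its categorical trace unchanged. The rest is bookkeeping --- tracking which morphisms connect which non-isomorphic simple objects, and using the cyclicity of the categorical trace for morphisms that need not be endomorphisms.
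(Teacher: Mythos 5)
Your proof is correct and follows essentially the same route as the paper's: reduce to the trace of $\lmk T^*T\otimes (T^*T)^{-1}\rmk\epsilon(\rho,\rho)$ via Proposition 2.2.5 of \cite{NT} and (\ref{tracef}), decompose $\epsilon(\rho,\rho)$ into simple summands by naturality, and kill the cross terms because $\Mor_{\caD}(\tau_b,\tau_a)=0$ for $a\neq b$. The only (harmless) difference is that you cancel $T^*T$ up front, using naturality of the single braiding together with cyclicity of the categorical trace, whereas the paper carries it through the decomposition and cancels it at the very end via $\sum_{\nu} u_a^\rho(\nu)u_a^\rho(\nu)^*=p_a^\rho$; both are valid.
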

\begin{proof}
By the same argument as in the proof of Lemma \ref{lem2},
we have
\begin{align}
\begin{split}
&\lmk R_\rho^*\otimes \bar R_\rho^*\rmk
\lmk\id_{\bar\rho}\otimes \epsilon(\rho,\rho)\otimes \id_{\bar\rho}\rmk
\lmk R_\rho\otimes \bar R_\rho\rmk\\
&=\sum_{a,b\in \Delta_\rho}\sum_{\mu=1}^{n_{\rho,a}}\sum_{\nu=1}^{n_{\rho,b}}
\Tr_{\rho\otimes \rho}\lmk 
\lmk T^*T\otimes (T^*T)^{-1}\rmk
\lmk u_b^\rho(\nu)\otimes u_a^\rho(\mu)\rmk
\epsilon\lmk \tau_a,\tau_b\rmk
\lmk u_a^\rho(\mu)\otimes u_b^\rho(\nu)\rmk^*
\rmk\\
&=\sum_{a,b\in \Delta_\rho}\sum_{\mu=1}^{n_{\rho,a}}\sum_{\nu=1}^{n_{\rho,b}}
\Tr_{\tau_b\otimes \tau_a}\lmk 
\epsilon\lmk \tau_a,\tau_b\rmk
\lmk u_a^\rho(\mu)\otimes u_b^\rho(\nu)\rmk^*
\lmk T^*T\otimes (T^*T)^{-1}\rmk
\lmk u_b^\rho(\nu)\otimes u_a^\rho(\mu)\rmk
\rmk\\
&=\sum_{a,b\in \Delta_\rho}
\Tr_{\tau_b\otimes \tau_a}\lmk 
\epsilon\lmk \tau_a,\tau_b\rmk
\sum_{\mu=1}^{n_{\rho,a}}\sum_{\nu=1}^{n_{\rho,b}}
\lmk u_a^\rho(\mu)^*T^*T u_b^\rho(\nu)\otimes u_b^\rho(\nu)^* (T^*T)^{-1}u_a^\rho(\mu)\rmk
\rmk.
\end{split}
\end{align}
Because $ u_a^\rho(\mu)^*T^*T u_b^\rho(\nu)\in \Mor_{\caD}(\tau_b,\tau_a)$,
only the terms $a=b$ can be nonzero.
Therefore,
\begin{align}
\begin{split}
&\lmk R_\rho^*\otimes \bar R_\rho^*\rmk
\lmk\id_{\bar\rho}\otimes \epsilon(\rho,\rho)\otimes \id_{\bar\rho}\rmk
\lmk R_\rho\otimes \bar R_\rho\rmk\\
&=\sum_{a\in \Delta_\rho}
\Tr_{\tau_a\otimes \tau_a}\lmk 
\epsilon\lmk \tau_a,\tau_a\rmk
\sum_{\mu=1}^{n_{\rho,a}}\sum_{\nu=1}^{n_{\rho,a}}
\lmk u_a^\rho(\mu)^*T^*T u_a^\rho(\nu)\otimes u_a^\rho(\nu)^* (T^*T)^{-1}u_a^\rho(\mu)\rmk
\rmk\\
&=\sum_{a\in \Delta_\rho}
\sum_{\mu=1}^{n_{\rho,a}}\sum_{\nu=1}^{n_{\rho,a}}
\lmk u_a^\rho(\mu)^*T^*T u_a^\rho(\nu)u_a^\rho(\nu)^* (T^*T)^{-1}u_a^\rho(\mu)\rmk
\Tr_{\tau_a\otimes \tau_a}\lmk 
\epsilon\lmk \tau_a,\tau_a\rmk\rmk\\
&=\sum_{a\in \Delta_\rho}
\sum_{\mu=1}^{n_{\rho,a}}\lmk u_a^\rho(\mu)^*T^*T (T^*T)^{-1}u_a^\rho(\mu)\rmk
\Tr_{\tau_a\otimes \tau_a}\lmk 
\epsilon\lmk \tau_a,\tau_a\rmk\rmk\\
&=\sum_{a\in \Delta_\rho}n_{\rho,a}\theta(a)
\end{split}
\end{align}
\end{proof}
\section{Faithful braided tensor functor}
Let $\caD$, $\caD'$ be strict braided $C^*$-tensor categories and
$(F, F_0, F_2)$ be a faithful unitary braided tensor functor.
(See Definition 2.1.3 of \cite{NT}).
Namely, $F: \caD\to\caD'$ is a linear faithful functor preserving $*$-operation,
$F_0 : \unit_{\caD'}\to F(\unit_{\caD})$ a unitary isomorphism,
and unitary natural isomorphisms
\begin{align}
F_2(\rho,\sigma): F(\rho)\otimes F(\sigma)\to F(\rho\otimes\sigma),\quad \rho,\sigma\in \Obj \caD.
\end{align}
They satisfy
\begin{align}\label{kani}
\begin{split}
F_2(\rho\otimes\sigma, \gamma)\cdot
\lmk F_2(\rho,\sigma)\otimes \id_{F(\gamma)}\rmk
=F_2\lmk\rho,\sigma\otimes \gamma\rmk
\lmk \id_{F(\rho)}\otimes F_2(\sigma,\gamma)\rmk,
\end{split}
\end{align}
and
\begin{align}\label{ebi}
\begin{split}
&F_2(\unit, \rho)\lmk F_0\otimes \id_{F(\rho)}\rmk
=\id_{F(\rho)},\\
&F_2(\rho, \unit)\lmk \id_{F(\rho)} \otimes F_0\rmk
=\id_{F(\rho)},
\end{split}
\end{align}
for any $\rho,\sigma,\gamma\in\Obj\caD$.
\begin{lem}\label{rr}
If $(R_\rho,\bar R_\rho)$ is a solution of the conjugate equation for
$(\rho,\bar\rho)\in\Obj\caD^{\times 2}$,
then 
\begin{align}\label{ff}
({R'_\rho}, {\bar R'_\rho})
:=
\lmk F_2\lmk \bar\rho, \rho\rmk^* F(R_\rho)F_0, 
F_2(\rho,\bar\rho)^* F(\bar R_\rho) F_0\rmk
\end{align}
is a solution of conjugate equation
for $\lmk F(\rho), F(\bar\rho)\rmk$.
\end{lem}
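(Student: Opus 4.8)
The statement to establish is that $(R'_\rho,\bar R'_\rho)$ from \eqref{ff} satisfies the two conjugate equations \eqref{ce} for the pair $\lmk F(\rho),F(\bar\rho)\rmk$, i.e.
\[
\lmk{\bar R'_\rho}^{*}\otimes\id_{F(\rho)}\rmk\lmk\id_{F(\rho)}\otimes R'_\rho\rmk=\id_{F(\rho)},\qquad\lmk{R'_\rho}^{*}\otimes\id_{F(\bar\rho)}\rmk\lmk\id_{F(\bar\rho)}\otimes\bar R'_\rho\rmk=\id_{F(\bar\rho)}.
\]
The plan is a straightforward diagram chase using only the monoidal-functor data $(F,F_0,F_2)$ together with functoriality; the braiding and the hexagon axioms are not needed. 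Since the two equations are interchanged by $\rho\leftrightarrow\bar\rho$, $R_\rho\leftrightarrow\bar R_\rho$, it is enough to prove the first one.

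First I would substitute ${\bar R'_\rho}^{*}=F_0^{*}F(\bar R_\rho)^{*}F_2(\rho,\bar\rho)$ and $R'_\rho=F_2(\bar\rho,\rho)^{*}F(R_\rho)F_0$ into the left-hand side and tensor each factor with $\id_{F(\rho)}$, so that the left-hand side becomes a composite of six arrows. The two innermost factors $F_2(\rho,\bar\rho)\otimes\id_{F(\rho)}$ and $\id_{F(\rho)}\otimes F_2(\bar\rho,\rho)^{*}$ are recombined by the associativity axiom \eqref{kani} (read in the strict category, where $\lmk\rho\otimes\bar\rho\rmk\otimes\rho=\rho\otimes\lmk\bar\rho\otimes\rho\rmk$) together with unitarity of $F_2$, giving $F_2(\rho\otimes\bar\rho,\rho)^{*}F_2(\rho,\bar\rho\otimes\rho)$. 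Then I would apply naturality of $F_2$ in its first variable to $\bar R_\rho^{*}\colon\rho\otimes\bar\rho\to\unit$, and naturality in its second variable to $R_\rho\colon\unit\to\bar\rho\otimes\rho$, which yields
\[
\lmk F(\bar R_\rho)^{*}\otimes\id_{F(\rho)}\rmk F_2(\rho\otimes\bar\rho,\rho)^{*}=F_2(\unit,\rho)^{-1}F\lmk\bar R_\rho^{*}\otimes\id_\rho\rmk,\qquad F_2(\rho,\bar\rho\otimes\rho)\lmk\id_{F(\rho)}\otimes F(R_\rho)\rmk=F\lmk\id_\rho\otimes R_\rho\rmk F_2(\rho,\unit).
\]
Substituting these, the middle of the composite becomes $F\lmk\bar R_\rho^{*}\otimes\id_\rho\rmk F\lmk\id_\rho\otimes R_\rho\rmk=F\lmk\lmk\bar R_\rho^{*}\otimes\id_\rho\rmk\lmk\id_\rho\otimes R_\rho\rmk\rmk=F(\id_\rho)=\id_{F(\rho)}$, using functoriality and the first conjugate equation in \eqref{ce}.

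What remains are the outer factors $\lmk F_0^{*}\otimes\id_{F(\rho)}\rmk F_2(\unit,\rho)^{-1}$ on the left and $F_2(\rho,\unit)\lmk\id_{F(\rho)}\otimes F_0\rmk$ on the right. The latter equals $\id_{F(\rho)}$ directly by the second identity in \eqref{ebi}. For the former, the first identity in \eqref{ebi} gives $F_2(\unit,\rho)^{-1}=F_0\otimes\id_{F(\rho)}$, so $\lmk F_0^{*}\otimes\id_{F(\rho)}\rmk\lmk F_0\otimes\id_{F(\rho)}\rmk=\lmk F_0^{*}F_0\rmk\otimes\id_{F(\rho)}=\id_{F(\rho)}$ by unitarity of $F_0$. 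Hence the left-hand side collapses to $\id_{F(\rho)}$, and the second conjugate equation follows from the identical computation with $\rho$ and $\bar\rho$ (and $R_\rho$ and $\bar R_\rho$) interchanged. I do not anticipate a genuine obstacle here: the only point requiring care is keeping the bracketings consistent under strictness and, at each invocation of naturality, selecting the correct tensor slot (first versus second) of $F_2$.
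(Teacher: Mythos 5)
Your proposal is correct and follows essentially the same route as the paper's proof: substitute the definitions, merge the two inner $F_2$ factors via the associativity constraint \eqref{kani} and unitarity, pull $\bar R_\rho^*\otimes\id_\rho$ and $\id_\rho\otimes R_\rho$ inside $F$ by naturality of $F_2$, apply the conjugate equation \eqref{ce} inside $F$, and cancel the remaining unit factors with \eqref{ebi}. The only cosmetic difference is that you isolate and cancel the outer $F_0$ factors explicitly via $F_2(\unit,\rho)^{-1}=F_0\otimes\id_{F(\rho)}$, whereas the paper carries them along and cancels them in the last step.
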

\begin{proof}
We will omit the subscript $\rho$ from $R$s, in the proof.
From the definition, it is clear that $R'\in\Mor_{\caD'}(\unit_{\caD'}, F(\bar \rho)\otimes F(\rho))$
and $R\in \Mor_{\caD'}(\unit_{\caD'}, F( \rho)\otimes F(\bar \rho))$.
Using (\ref{kani}), (\ref{ebi}), the naturality of $F_2$, and the conjugate equation for $(R,\bar R)$,
we obtain the first conjugate equation for $(R',\bar R')$:
\begin{align}
\begin{split}
&\lmk\lmk \bar R'\rmk^*\otimes\id_{F(\rho)}\rmk\lmk \id_{F(\rho)}\otimes R'\rmk\\
&=
\lmk
F_0^*F(\bar R)^*\otimes \id_{F(\rho)}
\rmk
\lmk
F_2(\rho,\bar\rho) \otimes \id_{F(\rho)}
\rmk
\lmk
\id_{F(\rho)}\otimes F_2\lmk \bar\rho, \rho\rmk^*
\rmk
\lmk
\id_{F(\rho)}\otimes  F(R)F_0
\rmk\\
&=\lmk
F_0^*F(\bar R)^*\otimes \id_{F(\rho)}
\rmk
F_2(\rho\otimes\bar\rho, \rho)^*
F_2\lmk\rho,\bar\rho\otimes \rho\rmk
\lmk
\id_{F(\rho)}\otimes  F(R)F_0
\rmk\\
&=
\lmk F_2(\rho\otimes\bar\rho, \rho)
\lmk F(\bar R)F_0\otimes \id_{F(\rho)}\rmk\rmk^*
F_2\lmk\rho,\bar\rho\otimes \rho\rmk
\lmk
\id_{F(\rho)}\otimes F(R)F_0
\rmk\\
&=
\lmk
F\lmk \bar R\otimes \id_\rho\rmk
F_2(\unit, \rho)
F_0\otimes \id_{F(\rho)}\rmk^*
F\lmk\id_{\rho}\otimes R\rmk
F_2\lmk\rho,\unit\rmk
\lmk
\id_{F(\rho)}\otimes  F_0
\rmk \\
&=
\lmk
F_2(\unit, \rho)
F_0\otimes \id_{F(\rho)}\rmk^*
F\lmk \lmk \bar R^*\otimes \id_\rho\rmk\lmk\id_{\rho}\otimes R\rmk\rmk
F_2\lmk\rho,\unit\rmk
\lmk
\id_{F(\rho)}\otimes  F_0
\rmk \\
&=
\lmk
F_2(\unit, \rho)
F_0\otimes \id_{F(\rho)}\rmk^*
F_2\lmk\rho,\unit\rmk
\lmk
\id_{F(\rho)}\otimes  F_0
\rmk \\
&=\id_{F(\rho)}.
\end{split}
\end{align}
The second conjugate equation follows from the same argument.
\end{proof}
\begin{lem} \label{rtr}
For any  solutions $(R_\rho,\bar R_\rho)$, 
$(R_\sigma,\bar R_\sigma)$ of the conjugate equations for 
$(\rho,\bar\rho), (\sigma,\bar\sigma)\in \Obj\caD^{\times 2}$ respectively,
define $({R'_\rho}, {\bar R'_\rho})$ by the formula (\ref{ff}), and
$({R'_\sigma}, {\bar R'_\sigma})$ analogously.
Then for any $T\in \End_{\caD}\lmk\rho\otimes \sigma\rmk$, we have
\begin{align}
\begin{split}
&\lmk {R'_\rho}\otimes  {\bar R'_\sigma}\rmk^*
\lmk 
\id_{F(\bar \rho)}\otimes F_2(\rho,\sigma)^* F(T) F_2(\rho,\sigma)\otimes \id_{F(\bar\sigma)} 
\rmk
\lmk {R'_\rho}\otimes  {\bar R'_\sigma}\rmk\\
&=\lmk R_\rho^*\otimes\bar R_\sigma^*\rmk
\lmk\id_{\bar\rho}\otimes T\otimes\id_{\bar\sigma}\rmk
\lmk  R_\rho \otimes\bar R_\sigma\rmk.
\end{split}
\end{align}
\end{lem}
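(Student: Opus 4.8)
The plan is to use that a unitary monoidal functor transports the categorical partial trace along the solutions of the conjugate equations produced in Lemma \ref{rr}. First I would fix a unitary
\[
G\colon F(\bar\rho)\otimes F(\rho)\otimes F(\sigma)\otimes F(\bar\sigma)\longrightarrow F\lmk\bar\rho\otimes\rho\otimes\sigma\otimes\bar\sigma\rmk
\]
assembled from the isomorphisms $F_2$; by the associativity constraint (\ref{kani}) together with strictness, any such iterated composite coincides, so $G$ is well defined and I may realize it by whichever grouping is convenient at each step. The identity of the Lemma then reduces to the two claims
\[
G\lmk {R'_\rho}\otimes {\bar R'_\sigma}\rmk = F\lmk R_\rho\otimes\bar R_\sigma\rmk F_0,
\qquad
G\,M\,G^{*} = F\lmk \id_{\bar\rho}\otimes T\otimes\id_{\bar\sigma}\rmk,
\]
where $M:=\id_{F(\bar\rho)}\otimes F_2(\rho,\sigma)^{*}F(T)F_2(\rho,\sigma)\otimes\id_{F(\bar\sigma)}$ is the middle factor of the left-hand side of the Lemma. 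Indeed, inserting $G^{*}G=\id$ on both sides of $M$ and using that $F$ is a $*$-functor compatible with $\otimes$ on morphisms, the left-hand side of the Lemma becomes
$F_0^{*}F\lmk R_\rho^{*}\otimes\bar R_\sigma^{*}\rmk\,F\lmk\id_{\bar\rho}\otimes T\otimes\id_{\bar\sigma}\rmk\,F\lmk R_\rho\otimes\bar R_\sigma\rmk F_0 = F_0^{*}F\lmk\lmk R_\rho^{*}\otimes\bar R_\sigma^{*}\rmk\lmk\id_{\bar\rho}\otimes T\otimes\id_{\bar\sigma}\rmk\lmk R_\rho\otimes\bar R_\sigma\rmk\rmk F_0$; since $\End_{\caD}(\unit)=\bbC\,\id_{\unit}$ and $F_0$ is unitary, conjugation by $F_0$ is the canonical identification of this scalar with its counterpart in the target category, which is the right-hand side of the Lemma.

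For the first claim I would realize $G$ as $F_2\lmk\bar\rho\otimes\rho,\sigma\otimes\bar\sigma\rmk\lmk F_2(\bar\rho,\rho)\otimes F_2(\sigma,\bar\sigma)\rmk$. Substituting the definitions (\ref{ff}) of ${R'_\rho}$ and ${\bar R'_\sigma}$ and using the interchange law, the pairs $F_2(\bar\rho,\rho)F_2(\bar\rho,\rho)^{*}$ and $F_2(\sigma,\bar\sigma)F_2(\sigma,\bar\sigma)^{*}$ cancel by unitarity of $F_2$; applying naturality of $F_2$ to $R_\rho\colon\unit\to\bar\rho\otimes\rho$ and $\bar R_\sigma\colon\unit\to\sigma\otimes\bar\sigma$ moves them past $F_2\lmk\bar\rho\otimes\rho,\sigma\otimes\bar\sigma\rmk$ and produces $F\lmk R_\rho\otimes\bar R_\sigma\rmk F_2(\unit,\unit)\lmk F_0\otimes F_0\rmk$; finally $F_2(\unit,\unit)\lmk F_0\otimes F_0\rmk=F_0$, a standard consequence of (\ref{ebi}), closes it. For the second claim I would realize $G$ instead as $F_2\lmk\bar\rho\otimes\rho\otimes\sigma,\bar\sigma\rmk\lmk F_2(\bar\rho,\rho\otimes\sigma)\otimes\id_{F(\bar\sigma)}\rmk\lmk\id_{F(\bar\rho)}\otimes F_2(\rho,\sigma)\otimes\id_{F(\bar\sigma)}\rmk$, so that the innermost factor $\id_{F(\bar\rho)}\otimes F_2(\rho,\sigma)\otimes\id_{F(\bar\sigma)}$ cancels the $F_2(\rho,\sigma)^{*}$ inside $M$ by unitarity, leaving $\id_{F(\bar\rho)}\otimes F(T)\otimes\id_{F(\bar\sigma)}$ between the two remaining $F_2$'s; two further applications of naturality of $F_2$ — first in the slot $(\bar\rho,\rho\otimes\sigma)$ against $T$, then in the slot $(\bar\rho\otimes\rho\otimes\sigma,\bar\sigma)$ against $\id_{\bar\rho}\otimes T$ — carry $F(T)$ to the outside as $F\lmk\id_{\bar\rho}\otimes T\otimes\id_{\bar\sigma}\rmk$ and leave precisely $G$ on the right, giving $G\,M\,G^{*}=F\lmk\id_{\bar\rho}\otimes T\otimes\id_{\bar\sigma}\rmk$.

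The only real work is the bookkeeping: verifying that the two realizations of $G$ agree (this is exactly the monoidal coherence coming from the hexagon (\ref{kani}) plus strictness) and applying the naturality squares of $F_2$ in the correct slots and order. I do not expect a conceptual obstacle — the content is merely that a unitary monoidal functor intertwines the partial trace $\lmk R_\rho^{*}\otimes\bar R_\sigma^{*}\rmk\lmk\id_{\bar\rho}\otimes(-)\otimes\id_{\bar\sigma}\rmk\lmk R_\rho\otimes\bar R_\sigma\rmk$ with its image under Lemma \ref{rr} — but the associativity and variance bookkeeping is where care is needed.
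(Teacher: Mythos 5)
Your proposal is correct and follows essentially the same route as the paper: the paper likewise conjugates the middle factor by the $F_2$-assembled unitary (your $G$, realized as $F_2(\bar\rho\otimes\rho\otimes\sigma,\bar\sigma)(F_2(\bar\rho,\rho\otimes\sigma)\otimes\id)(\id\otimes F_2(\rho,\sigma)\otimes\id)$) to get $F(\id_{\bar\rho}\otimes T\otimes\id_{\bar\sigma})$, shows this unitary sends $R'_\rho\otimes\bar R'_\sigma$ to $F(R_\rho\otimes\bar R_\sigma)F_0$ via (\ref{kani}), (\ref{ebi}) and naturality, and then uses multiplicativity of $F$ and unitarity of $F_0$ to identify the resulting scalar. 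Your only cosmetic deviation is invoking coherence to switch between two bracketings of $G$, where the paper performs the regrouping explicitly with (\ref{kani}); both are fine.
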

\begin{proof}
First, from the naturality of $F_2$,
\begin{align}
\begin{split}
&\id_{F(\bar \rho)}\otimes  F(T) \otimes \id_{F(\bar\sigma)} \\
&=\lmk \id_{F(\bar \rho)}\otimes  F(T) \rmk\otimes \id_{F(\bar\sigma)} \\
&=F_2\lmk\bar\rho, \rho\otimes \sigma \rmk^* 
F\lmk\id_{\bar\rho}\otimes T\rmk
F_2\lmk \bar\rho,\rho\otimes\sigma\rmk\otimes \id_{F(\bar\sigma)}\\
&=\lmk F_2\lmk\bar\rho, \rho\otimes \sigma \rmk\otimes \id_{F(\bar\sigma)}\rmk^* 
\lmk F\lmk\id_{\bar\rho}\otimes T\rmk\otimes \id_{F(\bar\sigma)}\rmk
\lmk F_2\lmk \bar\rho,\rho\otimes\sigma\rmk\otimes \id_{F(\bar\sigma)}\rmk\\
&=\lmk F_2\lmk\bar\rho, \rho\otimes \sigma \rmk\otimes \id_{F(\bar\sigma)}\rmk^* 
F_2\lmk\bar\rho\otimes\rho\otimes\sigma,\bar\sigma\rmk^*
F\lmk\id_{\bar\rho}\otimes T\otimes\id_{\bar\sigma}\rmk
F_2\lmk\bar\rho\otimes\rho\otimes\sigma,\bar\sigma\rmk
\lmk F_2\lmk \bar\rho,\rho\otimes\sigma\rmk\otimes \id_{F(\bar\sigma)}\rmk\\
\end{split}
\end{align}
Using (\ref{kani}), (\ref{ebi}) and naturality of $F_2$, we have
\begin{align}
\begin{split}
&F_2\lmk\bar\rho\otimes\rho\otimes\sigma,\bar\sigma\rmk
\lmk F_2\lmk \bar\rho,\rho\otimes\sigma\rmk\otimes \id_{F(\bar\sigma)}\rmk
\lmk 
\id_{F(\bar \rho)}\otimes F_2(\rho,\sigma)\otimes \id_{F(\bar\sigma)} 
\rmk
\lmk {R'_\rho}\otimes  {\bar R'_\sigma}\rmk\\
&=F_2\lmk\bar\rho\otimes\rho\otimes\sigma,\bar\sigma\rmk
\lmk \lmk F_2\lmk\bar\rho\otimes \rho,\sigma \rmk
\lmk F_2\lmk\bar\rho,\rho\rmk \otimes\id_{F(\sigma)} \rmk\rmk\otimes\id_{F(\bar\sigma)}\rmk
 \lmk F_2\lmk \bar\rho, \rho\rmk^* F(R_\rho)F_0\otimes
F_2(\sigma,\bar\sigma)^* F(\bar R_\sigma) F_0\rmk\\
&=F_2\lmk\bar\rho\otimes\rho\otimes\sigma,\bar\sigma\rmk
 \lmk F_2\lmk\bar\rho\otimes \rho,\sigma \rmk \otimes\id_{F(\bar\sigma)}\rmk
 \lmk F(R_\rho)F_0\otimes
F_2(\sigma,\bar\sigma)^* F(\bar R_\sigma) F_0\rmk\\
&=F_2\lmk\bar\rho\otimes\rho, \sigma\otimes \bar\sigma\rmk
 \lmk F(R_\rho)F_0\otimes
 F(\bar R_\sigma) F_0\rmk\\
 &=F(R_\rho\otimes \bar R_\sigma)
 F_2\lmk\unit, \unit\rmk
 \lmk F_0\otimes
  F_0\rmk\\
  &=F(R_\rho\otimes \bar R_\sigma)F_0
\end{split}
\end{align}
Substituting these, we obtain
\begin{align}
\begin{split}
&\lmk {R'_\rho}\otimes  {\bar R'_\sigma}\rmk^*
\lmk 
\id_{F(\bar \rho)}\otimes F_2(\rho,\sigma)^* F(T) F_2(\rho,\sigma)\otimes \id_{F(\bar\sigma)} 
\rmk
\lmk {R'_\rho}\otimes  {\bar R'_\sigma}\rmk\\
&=
\lmk F(R_\rho\otimes \bar R_\sigma)F_0\rmk^*
F\lmk\id_{\bar\rho}\otimes T\otimes\id_{\bar\sigma}\rmk
F(R_\rho\otimes \bar R_\sigma)F_0\\
&=F_0^* F\lmk 
(R_\rho\otimes \bar R_\sigma)^*\lmk\id_{\bar\rho}\otimes T\otimes\id_{\bar\sigma}\rmk
(R_\rho\otimes \bar R_\sigma)
\rmk
F_0\\
&=
(R_\rho\otimes \bar R_\sigma)^*\lmk\id_{\bar\rho}\otimes T\otimes\id_{\bar\sigma}\rmk
(R_\rho\otimes \bar R_\sigma).
\end{split}
\end{align}
\end{proof}
\section{Proof of Theorem \ref{main}}
Let $F$ be the faithful unitary braided tensor functor $F$ from
$\caD_{\omega_2}$ to $\caD_{\omega_1}$ given in Theorem \ref{MTO}.
Let $\zeta,\xi\in \Delta^{\caD_{\omega_2}}$
with representative $\tau_\zeta^{\omega_2}\in\zeta$, 
$\tau_\xi^{\omega_2}\in\xi$.
Then, from Lemma \ref{lem2}, Lemma \ref{rr} and Lemma \ref{rtr},
there are positive numbers $t_a^\zeta$, $s_b^\xi$ for $a\in \Delta_{F\lmk \tau_\zeta^{\omega_2}\rmk}$,  $b\in\Delta_{F\lmk \tau_\xi^{\omega_2}\rmk}$
such that 
\begin{align}
\begin{split}
&S_{\zeta,\xi}^{\omega_2}
=\sum_{a\in\Delta_{F\lmk \tau_\zeta^{\omega_2}\rmk}, b\in \Delta_{F\lmk \tau_\zeta^{\omega_2}\rmk}} t_a^\zeta s_b^\xi S_{a,b}^{\omega_1}
\end{split}
\end{align}
and
\begin{align}
\sum_{a\in\Delta_{F\lmk \tau_\zeta^{\omega_2}\rmk}} 
d^{\omega_1}(a) t_a^\zeta=d^{\omega_2}(\zeta),\quad
\sum_{b\in\Delta_{F\lmk \tau_\xi^{\omega_2}\rmk}} d^{\omega_1}(b) s_b^\xi=d^{\omega_2}(\xi).
\end{align}
Similarly, there are positive integers $n_{\zeta,a}$, $a\in \Delta_{F\lmk \tau_\zeta^{\omega_2}\rmk}$
\begin{align}
\begin{split}
\theta^{\omega_2}_\zeta=\sum_{a\in \Delta_{F\lmk \tau_\zeta^{\omega_2}\rmk}}n_{\zeta,a}\theta^{\omega_1}(a).
\end{split}
\end{align}
Define the $|\Delta^{\caD_{\omega_2}}|\times |\Delta^{\caD_{\omega_1}}|$-matrix
$X$ by $X_{\zeta,a}:= t_a^\zeta$ if $a\in\Delta_{F\lmk \tau_\zeta^{\omega_2}\rmk}$,
and $X_{\zeta,a}:= 0$ if $a\notin\Delta_{F\lmk \tau_\zeta^{\omega_2}\rmk}$.
Define the $|\Delta^{\caD_{\omega_1}}|\times |\Delta^{\caD_{\omega_2}}|$-matrix
$Y$ by $Y_{b,\xi}:= s_b^\xi$ if $b\in\Delta_{F\lmk \tau_\xi^{\omega_2}\rmk}$,
and $Y_{b,\xi}:= 0$ if $b\notin\Delta_{F\lmk \tau_\xi^{\omega_2}\rmk}$.
Define the $|\Delta^{\caD_{\omega_2}}|\times |\Delta^{\caD_{\omega_1}}|$-matrix
$N$ by $N_{\zeta,a}:= n_{\zeta,a}$ if $a\in\Delta_{F\lmk \tau_\zeta^{\omega_2}\rmk}$,
and $N_{\zeta,a}:= 0$ if $a\notin\Delta_{F\lmk \tau_\zeta^{\omega_2}\rmk}$.
It is immediate to see the properties in Theorem \ref{main}.

\appendix
\section{Mixed state Approximate Haag duality}\label{mah}
By a $2$-dimensional quantum spin system, we mean a $C^*$-algebra constructed as follows.
We denote the algebra of $d\times d$ matrices by $\Mat_{d}$ with $d\ge 2$.
For each $z\in\bbZ^2$,  let $\caA_{\{z\}}$ be an isomorphic copy of $\Mat_{d}$, and for any finite subset $\Lambda\subset\bbZ^2$, we set $\caA_{\Lambda} = \bigotimes_{z\in\Lambda}\caA_{\{z\}}$.
For finite $\Lambda$, the algebra $\caA_{\Lambda} $ can be regarded as the set of all bounded operators acting on
the Hilbert space $\bigotimes_{z\in\Lambda}{\bbC}^{d}$.
We use this identification freely.
If $\Lambda_1\subset\Lambda_2$, the algebra $\caA_{\Lambda_1}$ is naturally embedded in $\caA_{\Lambda_2}$ by tensoring its elements with the identity. 
For an infinite subset $\Gamma\subset \bbZ^{2}$,
$\caA_{\Gamma}$
is given as the inductive limit of the algebras $\caA_{\Lambda}$ with $\Lambda$, finite subsets of $\Gamma$.
We call $\caA_{\Gamma}$ the quantum spin system on $\Gamma$.
The two-dimensional quantum spin system is the algebra $\caA_{\bbZ^2}$. 
For a subset $\Gamma_1$ of $\Gamma\subset\bbZ^{2}$,
the algebra $\caA_{\Gamma_1}$ can be regarded as a subalgebra of $\caA_{\Gamma}$. 
For $\Gamma\subset \bbR^2$, with a bit of abuse of notation, we write $\caA_{\Gamma}$
to denote $\caA_{\Gamma\cap \bbZ^2}$.
Also, $\Gamma^c$ denotes the complement of $\Gamma$ in $\bbR^2$.

In our framework, regions called cones play an important role.
For
each $\bm a\in \bbR^2$, $\theta\in\bbR$ and $\varphi\in (0,\pi)$,
we set 
\begin{align*}
\Lambda_{\bm a, \theta,\varphi}
:
=&\lmk \bm a+\left\{
t\bm e_{\beta}\mid t>0,\quad \beta\in (\theta-\varphi,\theta+\varphi)\right\}\rmk
\end{align*}
where  $\bm e_{\beta}=(\cos\beta,\sin\beta)$.
We call a set of this form a cone.
We set $|\arg\Lambda|=2\varphi$
and $\bm e_\Lambda:=\bm e_{\theta}$ for $\Lambda=\Lambda_{\bm a, \theta,\varphi}$.
For $\varepsilon>0$, $t\in \bbR$ and $\Lambda=\Lambda_{\bm a, \theta,\varphi}$, $\Lambda_\varepsilon$ 
denotes $\Lambda_\varepsilon=\Lambda_{\bm a, \theta,\varphi+\varepsilon}$,
$\Lambda(t):=\ld +t {\bm e}_\ld$.

\begin{defn}\label{AHdef}
Let $\caA$ be a $2$-dimensional quantum spin system.
Let $\omega$ be a state on $\caA$
with a GNS representation $(\caH,\pi)$.
We say $\omega$ satisfies the approximate Haag duality 
if, for any $\zeta\in (0,\pi)$, $0<\varepsilon<\frac14(\pi-\zeta)$,
there exists an $R_{\zeta,\varepsilon}\ge 0$
and a decreasing function $f_{\zeta, \varepsilon}: \bbR_{\ge 0}\to \bbR_{\ge 0}$
with $\lim_{t\to\infty} f_{\zeta, \varepsilon}(t)=0$
satisfying the following : 
for any cone $\Lambda$ with $\lv\arg \Lambda\rv=2\zeta$,
there exists a unitary $U_{\Lambda,\varepsilon}\in \pi\lmk\caA\rmk''$
such that
\begin{description}
\item[(i)]
\[
\pi(\caA_{\Lambda^c})'\cap \pi\lmk \caA_{\bbZ^2}\rmk''
\subset \lmk U_{\Lambda,\varepsilon}\rmk
\pi\lmk\caA_{\Lambda_{\varepsilon}(-R_{\zeta,\varepsilon})}\rmk''\lmk U_{\Lambda,\varepsilon}\rmk^*
,\]
 and
\item[(ii)]
for any $t\ge 0$, there exists a unitary $U_{\Lambda,\varepsilon, t}\in 
\pi\lmk \caA_{\Lambda_{2\varepsilon}(-t)}\rmk''$
such that
\begin{align}
\lV U_{\Lambda,\varepsilon, t}-U_{\Lambda,\varepsilon}\rV
\le f_{\zeta,\varepsilon}(t).
\end{align}
\end{description}
\end{defn}
\bibliographystyle{alpha}
\bibliography{ST}

\end{document}